\title{Unifying the Landscape of Cell-Probe Lower Bounds%
  \thanks{The conference version of this paper appeared in FOCS'08
    under the title \emph{(Data) Structures}.} }
\author{
Mihai P\v{a}tra\c{s}cu%
  \thanks{{\tt mip@alum.mit.edu}. AT\&T Labs.
	Parts of this work were done while the author was at MIT and IBM
        Almaden Research Center.} }
\newtheorem{theorem}{Theorem}
\newtheorem{lemma}[theorem]{Lemma}
\newtheorem{reduction}[theorem]{Reduction}
\newtheorem{claim}[theorem]{Claim}
\newcommand{\poly}{\mathrm{poly}}
\newcommand{\polylog}{\mathrm{polylog}}
\newcommand{\Omegat}{\widetilde{\Omega}}
\newcommand{\eps}{\varepsilon}
\newcommand{\calD}{{\cal D}}
\newcommand{\calS}{\mathcal{S}}
\newcommand{\calT}{\mathcal{T}}
\newcommand{\Dyes}{\calD_{\textnormal{\sc yes}}}
\newcommand{\coins}{{\cal C}}
\newcommand{\err}{{\cal E}}
\DeclareMathOperator*{\E}{\mathbb{E}}
\DeclareMathOperator*{\HH}{{\rm H}}
\DeclareMathOperator*{\I}{{\rm I}}
\begin{document}

\maketitle

\begin{abstract}
We show that a large fraction of the data-structure lower bounds known
today in fact follow by reduction from the communication complexity of
lopsided (asymmetric) set disjointness. This includes lower bounds
for:
\begin{itemize}
\item high-dimensional problems, where the goal is to show large space
  lower bounds.

\item constant-dimensional geometric problems, where the goal is to
  bound the query time for space $O(n \cdot \polylog n)$.

\item dynamic problems, where we are looking for a trade-off between
  query and update time. (In this case, our bounds are slightly weaker
  than the originals, losing a $\lg\lg n$ factor.)
\end{itemize}

\smallskip

Our reductions also imply the following \emph{new} results:
\begin{itemize}
\item an $\Omega(\lg n / \lg\lg n)$ bound for 4-dimensional range
  reporting, given space $O(n\cdot \polylog n)$. This is quite timely,
  since a recent result \cite{nekrich07range} solved 3D reporting in
  $O(\lg^2 \lg n)$ time, raising the prospect that higher dimensions
  could also be easy.

\item a tight space lower bound for the partial match problem, for
  constant query time.

\item the first lower bound for reachability oracles.
\end{itemize}

In the process, we prove optimal randomized lower bounds for lopsided
set disjointness.
\end{abstract}

%\begin{keywords}
%lower bounds, data structures, cell-probe complexity, range queries
%\end{keywords}

%\pagestyle{myheadings}
%\thispagestyle{plain}
%\markboth{MIHAI PATRASCU}{CELL-PROBE LOWER BOUNDS}

\section{Introduction}

The cell-probe model can be visualized as follows. The memory is
organized into \emph{cells} (words) of $w$ bits each. A data structure
occupies a \emph{space} of $S$ cells. The CPU receives queries and,
for dynamic data structures, updates online. The CPU starts executing
each individual operation with an empty internal state (no knowledge
about the data structure), and can proceed by reading or writing
memory cells. The running time is defined to be equal to the number of
memory probes; any computation inside the CPU is free.

The predictive power of the cell-probe model (stemming from its
machine independence and information-theoretic flavor) have long
established it as the \emph{de facto} standard for data-structure
lower bounds. The end of the 80s saw the publication of two landmark
papers in the field: Ajtai's static lower bound for predecessor search
\cite{ajtai88pred}, and the dynamic lower bounds of Fredman and Saks
\cite{fredman89cellprobe}. In the 20 years that have passed,
cell-probe complexity has developed into a mature research direction,
with a substantial bibliography: we are aware of \cite{ajtai88pred,
fredman89cellprobe, miltersen93bitprobe, miltersen94incremental,
miltersen94pred, husfeldt96sums, alstrup98marked, fredman98connect,
miltersen99asymmetric, borodin99nn, alstrup99uf, chakrabarti99ann,
alstrup01dynNN, benamram01sums, barkol02nn, beame02pred,
gal03succinct, husfeldt03sums, chakrabarti04ann, liu04ann,
jayram05match, patrascu06pred, patrascu06higher, andoni06eps2n,
patrascu06loglb, patrascu07bit, patrascu07sum2d, patrascu07randpred,
sen08roundelim}. 
\begin{figure*}
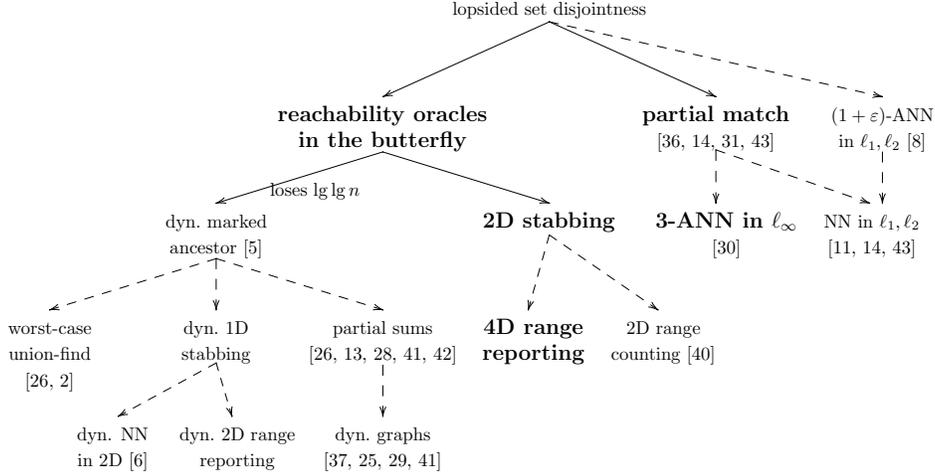

  \centering
  \scalebox{0.7}{
\Treek[5]{4}{
& & & \K{lopsided set disjointness}
     \GB{dl}{->} \GB{dr}{->} \GB{drr}{-->}
\\
& & \K{\bf\large reachability oracles}
         \Below{\bf\large in the butterfly}
  \GBk{-4.5}{0}{dl}{->}^{\textrm{\normalsize loses~$\lg\lg n$}}  
     \GBk{-4.5}{0}{dr}{->}
& &  
\K{\bf\large partial match}  \Below{\cite{miltersen99asymmetric, 
  borodin99nn,jayram05match,patrascu06higher}}
     \GBkk{0,-4.1}{0,0}{d}{-->}  \GBkk{0,-4.1}{-2.4,0}{dr}{-->}
& \K{$(1+\eps)$-ANN}\Below{in $\ell_1, \ell_2$ \cite{andoni06eps2n}} 
     \GBk{-4.5}{0}{d}{-->}
\\
& \K{dyn.~marked}\Below{ancestor \cite{alstrup98marked}}
   \GBk{-4.5}{0}{dl}{-->}    \GBk{-4.5}{0}{d}{-->}   \GBk{-4.5}{0}{dr}{-->}
& & \K{\bf\large 2D stabbing}
      \GBkk{0,0}{-4,0}{d}{-->} \GBkk{0,0}{-11,0}{dr}{-->}
& \K[2]{\bf\large 3-ANN in $\ell_\infty$}\Below{\cite{indyk01linfty}}
& \K[-2]{NN in $\ell_1, \ell_2$}
      \Below{\cite{barkol02nn,borodin99nn,patrascu06higher}} 
\\
\K{worst-case}\Below{union-find}\Below{\cite{fredman89cellprobe,alstrup99uf}}
& \K{dyn.~1D}\Below{stabbing}  
    \GBkk{0,-4.1}{3,0}{d}{-->} \GBkk{0,-4.1}{13,0}{dl}{-->}
& \K{partial sums}\Below{\cite{fredman89cellprobe, 
        benamram01sums, husfeldt03sums, patrascu06loglb, patrascu07bit}} 
    \GBk{-4.5}{0}{d}{-->}
& \K[-3]{\bf\large 4D range}\Below{\bf\large reporting}
& \K[-10]{2D range}\Below{counting \cite{patrascu07sum2d}} 
\\
\K[12]{dyn. NN}\Below{in 2D \cite{alstrup01dynNN}}
& \K[4]{dyn.~2D range}\Below{reporting}
& \K{dyn.~graphs}\Below{\cite{miltersen94incremental, 
    fredman98connect, husfeldt96sums, patrascu06loglb}}
}
  }
\caption{Dashed lines indicate reductions that were already known,
  while solid lines indicate novel reductions. For problems in bold,
  we obtain stronger results than what was previously known. }
  \label{fig:reductions}
\end{figure*}

The topics being studied cluster into three main categories:

\paragraph{Dynamic problems.}
Here, the goal is to understand the trade-off between the query time
$t_q$, and the update time $t_u$. The best known lower bound
\cite{patrascu06loglb} implies that $\max \{t_q, t_u \} = \Omega(\lg
n)$. Most proofs employ a technique introduced by Fredman and
Saks~\cite{fredman89cellprobe}, which divides the time line into
``epochs'', and argues that a query needs to read a cell written in
each epoch, lest it will miss an important update that happened then.

\paragraph{High-dimensional static problems.}
These are ``hard problems,'' exhibiting a sharp phase transition:
either the space is linear and the query is very slow (e.g.~linear
search); or the space is very large (superpolynomial) and the query is
essentially constant. 

Proofs employ a technique introduced by
Miltersen~\cite{miltersen94pred}, which considers a communication game
between a party holding the query, and a party holding the database.
Simulating the CPU's cell probes, the querier can solve the problem by
sending $t_q \lg S$ bits. If we lower bound this communication by some
$a$, we conclude that $S \ge 2^{\Omega(a/t_q)}$. The bounds are
interesting (often tight) for constant query time, but degrade quickly
for higher $t_q$.

\paragraph{Low-dimensional static problems.}
These are problems for which we have polylogarithmic query bounds,
with near linear space. The main research goal (within reach) has been
to find the best query time for space $S = O(n\cdot
\polylog n)$. The best known bound \cite{patrascu06higher} implies
that $t_q = \Omega(\lg n / \lg\lg n)$. The technique used in this
regime, introduced by P\v{a}tra\c{s}cu and
Thorup~\cite{patrascu06pred}, is to consider a direct sum
communication game, in which $O(n/\polylog n)$ queriers want to
communicate with the database simultaneously.

\bigskip

The cross-over of techniques between the three categories has so far
been minimal. At the same time, the diversity inside each category
appears substantial, even to someone well-versed in the
field. However, we will see that this appearance is deceiving.

By a series of well-crafted reductions (Figure~\ref{fig:reductions}),
we are able to unify a large majority of the known results in each of
the three categories. Since the problems mentioned in
Figure~\ref{fig:reductions} are rather well known, we do not describe
them here. The reader unfamiliar with the field can consult
Appendix~\ref{app:catalog}, which introduces these problems and
sketches some of the known reductions.

All our results follow, by reductions, from a \emph{single} lower
bound on the communication complexity of lopsided (asymmetric) set
disjointness. In this problem, Alice and Bob receive two sets $S$,
respectively $T$, and they want to determine whether $S\cap T =
\emptyset$. The lopsided nature of the problem comes from the set
sizes, $|S| \ll |T|$, and from the fact that Alice may communicate
much less than Bob.

For several problems, our unified proof is in fact simpler than the
original. This is certainly true for 2D range counting
\cite{patrascu07sum2d}, and arguably so for exact nearest neighbor
\cite{barkol02nn} and marked ancestor \cite{alstrup98marked} (though
for the latter, our bound is suboptimal by a $\lg\lg n$ factor).

For 2D stabbing and 4D range reporting, we obtain the first nontrivial
lower bounds, while for partial match, we improve the best previous
bound \cite{jayram05match} to an optimal one.

It seems safe to say that the sweeping generality of our results come
as a significant surprise (it has certainly been a major source of
surprise for the author). \emph{A priori}, it seems hard to imagine a
formal connection between such lower bounds for very different
problems, in very different settings. Much of the magic of our results
lies in defining the right link between the problems: reachability
queries in butterfly graphs. Once we decide to use this middle ground,
it is not hard to give reductions to and from set disjointness,
dynamic marked ancestor, and static 4-dimensional range
reporting. Each of these reductions is natural, but the combination is
no less surprising.

\subsection{New Results}

\paragraph{Partial match.}
Remember that in the partial match problem, we have a data base of $n$
strings in $\{0,1\}^d$, and a query string from the alphabet
$\{0,1,\star\}^d$. The goal is to determine whether any string in the
database matches this pattern, where $\star$ can match anything.  In
\S\ref{sec:pm}, we show that:

\begin{theorem}
Let Alice hold a string in $\{0,1,\star\}^d$, and Bob hold $n$ points
in $\{0,1\}^d$. In any bounded-error protocol answering the partial
match problem, either Alice sends $\Omega(d)$ bits or Bob sends
$\Omega(n^{1-\eps})$ bits, for any constant $\eps > 0$.
\end{theorem}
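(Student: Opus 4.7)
The plan is to reduce lopsided set disjointness (LSD) to partial match and then invoke the randomized LSD lower bound established separately in the paper. Given an LSD instance where Alice holds $S \subseteq [m]$ and Bob holds $T \subseteq [m]$, I will simulate them with a partial-match instance of dimension $d := m$ and $n := |T|$ database points. Alice's query pattern $\alpha \in \{0,1,\star\}^m$ is defined by $\alpha_i = \star$ if $i \in S$ and $\alpha_i = 1$ otherwise. Bob produces one database point per element $t \in T$: the point $p^{(t)} \in \{0,1\}^m$ has $p^{(t)}_t = 0$ and $p^{(t)}_i = 1$ for every $i \neq t$. Both encodings are computed from each party's own input alone, so the parties can run any partial-match protocol on $(\alpha, \{p^{(t)} : t \in T\})$ with no extra communication.

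Correctness is a short check: the non-$\star$ coordinates of $\alpha$ are exactly $[m] \setminus S$, where $\alpha$ carries the value $1$; hence $p^{(t)}$ matches $\alpha$ iff $p^{(t)}_i = 1$ for every $i \notin S$, which — since $p^{(t)}$ is zero only at coordinate $t$ — happens iff $t \in S$. Therefore the database contains a match iff $S \cap T \neq \emptyset$, and any bounded-error partial-match protocol yields a bounded-error LSD protocol with identical per-party communication. Choosing the hard distribution for LSD with $|S|$ a constant fraction of $m$ and $|T| = n$, the randomized LSD lower bound then delivers the dichotomy: either Alice transmits $\Omega(m) = \Omega(d)$ bits or Bob transmits $\Omega(n^{1-\eps})$ bits, for every constant $\eps>0$.

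The only genuinely delicate point in the reduction is the \emph{polarity} of the encoding: one must place wildcards on $S$ (not on its complement) and zeros at the coordinate $t$ (not ones), so that the partial-match event ``$t \in S$'' lines up with the LSD event ``not disjoint.'' Beyond that, the reduction is mechanical. The real obstacle — and where the paper's main technical contribution enters — is the LSD lower bound itself; once that is in hand, the partial-match bound is among the cleanest applications to read off, avoiding the direct-sum / richness machinery used in the previous best bound \cite{jayram05match}.
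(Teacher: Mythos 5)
Your reduction is a correct map from set disjointness to partial match (and the polarity check is right), but it does not prove the theorem: the parameters come out wrong. With the indicator-vector encoding, the dimension of the partial-match instance equals the size of the universe, so $d = N\cdot B$ while $n = |T| \le N\cdot B$; in particular every instance you produce has $n \le d$. The LSD lower bound only gives Alice $\Omega(N \lg B)$ bits, which in your parameterization is $\Omega\big(d \cdot \tfrac{\lg B}{B}\big)$ --- this is $\Omega(d)$ only if $B = O(1)$, i.e., only if you collapse to (essentially symmetric) disjointness with $n = \Theta(d)$. You cannot recover the general statement from this degenerate regime: padding Bob's database up from $n' \le d$ points to $n$ points preserves only the bound $\Omega((n')^{1-\eps}) \le \Omega(d^{1-\eps})$ on Bob's side, not $\Omega(n^{1-\eps})$. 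But the theorem is needed precisely when $n \gg d$ (e.g.\ $d = \polylog n$); that is the regime in which the data-structure consequence ``space $2^{\Omega(d/t)}$ with word size $O(n^{1-\eps}/t)$'' and the decision-tree bound are meaningful.

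The missing idea is a compressed encoding of Alice's set. The paper first passes to Blocked-LSD (universe $[N]\times[B]$, one element of $S$ per block; the reduction costs Alice only $O(N)$ extra bits), and then encodes each block's element $s_i \in [B]$ by a constant-weight codeword $\phi(s_i) \in \{0,1\}^b$ with $b = O(\lg B)$. Alice's query is the concatenation $\phi(s_1)\cdots\phi(s_N)$, of dimension $d = O(N\lg B)$, while Bob still places one string $0^{(x-1)b}\,\phi(y)\,0^{(N-x)b}$ per element $(x,y) \in T$, i.e., up to $n = N B$ points. Because no codeword of a constant-weight code dominates another, domination of a database string by the query forces $\phi(s_i) = \phi(y)$, hence $s_i = y$, so correctness survives the compression. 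Now Alice's LSD bound $\Omega(N\lg B)$ \emph{is} $\Omega(d)$, Bob's bound $N B^{1-\delta} \ge (NB)^{1-\delta} = n^{1-\delta}$ is $\Omega(n^{1-\eps})$, and by tuning $N$ and $B$ one realizes any desired relationship $n \gg d$. This decoupling of $d$ from $n$ is the actual content of the reduction, not a cosmetic optimization.
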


By the standard relation between asymmetric communication complexity
and cell-probe data structures~\cite{miltersen99asymmetric} and
decision trees~\cite{patrascu08Linfty}, this bounds implies that:
\begin{itemize}
\item a data structure for the partial match problem with cell-probe
  complexity $t$ must use space $2^{\Omega(d/t)}$, assuming the word
  size is $O(n^{1-\eps} / t)$.

\item a decision tree for the partial match problem must have size
  $2^{\Omega(d)}$, assuming the depth is $O(n^{1-2\eps}/d)$ and the
  predicate size is $O(n^\eps)$.
\end{itemize}

As usual with such bounds, the cell-probe result is optimal for
constant query time, but degrades quickly with $t$. Note that in the
decision tree model, we have a sharp transition between depth and
size: when the depth is $O(n)$, linear size can be achieved (search
the entire database).

The partial match problem is well investigated
\cite{miltersen99asymmetric, borodin99nn, jayram05match,
patrascu06higher}. The best previous bound \cite{jayram05match} for
Alice's communication was $\Omega(d/\lg n)$ bits, instead of our
optimal $\Omega(d)$. 

Our reduction is a simple exercise, and it seems surprising that the
connection was not established before. For instance, Barkol and
Rabani~\cite{barkol02nn} gave a difficult lower bound for exact near
neighbor in the Hamming cube, though it was well known that partial
match reduces to exact near neighbor. This suggests that partial match
was viewed as a ``nasty'' problem.

By the reduction of \cite{indyk01linfty}, lower bounds for partial
match also carry over to near neighbor in $\ell_\infty$, with
approximation $\le 3$. See \cite{patrascu08Linfty} for the case of
higher approximation.

\paragraph{Reachability oracles.}
The following problem appears very hard: preprocess a sparse directed
graph in less than $n^2$ space, such that reachability queries (can
$u$ be reached from $v$?) are answered efficiently. The problem seems
to belong to folklore, and we are not aware of any nontrivial positive
results. By contrast, for undirected graphs, many oracles are known.

In \S\ref{sec:butterfly}, we show the first lower bound supporting the
apparent difficulty of the problem:

\begin{theorem}   \label{thm:oracles}
A reachability oracle using space $S$ in the cell probe model with
$w$-bit cells, requires query time $\Omega(\lg n / \lg \frac{Sw}{n})$.
\end{theorem}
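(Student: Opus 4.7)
The plan is to reduce lopsided set disjointness (LSD) to reachability queries on a \emph{butterfly graph}, and then invoke the randomized LSD lower bound established earlier in the paper. Take the standard directed butterfly with $d+1$ levels of $2^d$ nodes each, so $n = \Theta(d \cdot 2^d)$ vertices and edges. Its crucial combinatorial property is that between every source at level~$0$ and every sink at level~$d$ there is a \emph{unique} directed path, of length exactly~$d$, obtained by flipping the bits of the node label where source and sink disagree, one per level. Identify the LSD universe (of size $N = \Theta(n)$) with the edge set of the butterfly. Alice, holding a source-sink pair $(s,t)$, encodes as her set $S$ the $k = d$ edges on the unique $s$-to-$t$ path. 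Bob deletes the edges of his set $T$ from the butterfly and preprocesses the resulting pruned graph with the hypothesized oracle. When Alice asks ``is $t$ reachable from $s$?'', the answer is YES iff $S \cap T = \emptyset$, by uniqueness of the path.

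A cell-probe oracle of query complexity $t$, space $S$ cells, and word size $w$ is converted by the Miltersen simulation into a communication protocol in which Alice transmits $t \lceil \log S \rceil$ bits (probe addresses) and Bob replies with $tw$ bits (cell contents). Plugging these into the trade-off form of the randomized LSD bound, and choosing the butterfly depth $d = \Theta(\log n)$, gives---after optimizing the LSD trade-off parameter against $\log S$ and $w$---the desired $t = \Omega(\log n / \log(Sw/n))$.

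The main obstacle is not the reduction itself, which is a one-liner once the unique-path property of the butterfly is on the table, but rather proving the LSD lower bound in the strong \emph{trade-off} form required here: one must rule out joint protocols in which both Alice and Bob speak only a little, not merely those where one side is essentially silent. The smooth $\log(Sw/n)$ denominator in the final bound emerges precisely from this two-sided trade-off. Once the LSD tool is in place, the butterfly construction together with the Miltersen simulation combine to yield the theorem with only routine parameter manipulation, and the same reduction framework will slot neatly into the partial match and $2$D stabbing / $4$D range reporting results that are advertised as corollaries in the introduction.
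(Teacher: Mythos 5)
Your reduction has a fatal flaw: it only produces LSD instances in which Alice's set is the edge set of a single source--sink path, and the LSD lower bound simply does not hold for that subfamily. In your setup Alice's set has $N = d$ elements drawn from a universe of $\Theta(n)$ edges, so the LSD bound you want to invoke would force Alice to send $\Omega(N \lg B) = \Omega(d \lg(n/d))$ bits. But a source--sink pair in the butterfly is describable in $O(d)$ bits, so Alice can always send her entire input for far less than that; no protocol lower bound of the required strength can exist for these instances. Equivalently, read in the forward direction your map is not a reduction from LSD at all: a general LSD instance hands Alice an \emph{arbitrary} set of $d$ edges, which does not assemble into one source--sink path, so you cannot convert it into a single reachability query. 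A sanity check confirms something is wrong: if your argument worked, the first branch of the trade-off would give $t \ge \delta d \lg(n/d)/\lg S = \Omega(\lg n)$ for $S = n\,\polylog n$ and $d = \Theta(\lg n)$, contradicting the known $O(\lg n/\lg\lg n)$ upper bound for reachability in butterflies with that much space.

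The paper's proof is built precisely to escape this trap, and the mechanism is \emph{parallel queries}, not a single query. One first passes to 2-Blocked-LSD (shown in \S\ref{sec:special-lsd} to be as hard as LSD via an $O(N)$-bit preamble), and chooses the bijection between the universe $[\frac{N}{B}]\times[B]\times[B]$ and the butterfly's edges so that Alice's $N$ elements form a perfect matching on every level; her edges then decompose into $k = N/d$ vertex-disjoint source--sink paths, and she issues $k$ queries. The second essential ingredient is that for near-linear space a single query is useless even in principle (Alice's cost $t\lg S$ cannot distinguish $S = O(n)$ from $S = n^{1+\eps}$), so the simulation must charge Alice $\lg\binom{S}{k} = O(tk\lg\frac{S}{k})$ for $k$ simultaneous probes. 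Comparing $O(N\cdot\frac{t}{d}\lg\frac{Sd}{N})$ and $O(N\cdot\frac{t}{d}w)$ against the 2-Blocked-LSD bound, with the degree set so that $B \ge w^2$ and $\lg B \ge \lg\frac{Sd}{N}$, yields $t = \Omega(d)$ with $d = \Theta(\log_B N) = \Theta(\lg n/\lg\frac{Sw}{n})$; note the depth is \emph{not} $\Theta(\lg n)$ in general. Your write-up acknowledges neither the parallel-query framework nor the matching structure that makes Alice's arbitrary set decomposable into paths, and these are the entire content of the argument.
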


The bound holds even if the graph is a subgraph of a butterfly graph,
and in fact it is tight for this special case. If constant time is
desired, our bounds shows that the space needs to be $n^{1 +
\Omega(1)}$. This stands in contrast to undirected graphs, for which
connectivity oracles are easy to implement with $O(n)$ space and
$O(1)$ query time. Note however, that our lower bound is still very
far from the conjectured hardness of the problem.

\paragraph{Range reporting in 4D.}
Range reporting in 2D can be solved in $O(\lg\lg n)$ time and almost
linear space~\cite{alstrup00range}; see \cite{patrascu07randpred} for
a lower bound on the query time. 

Known techniques based on range trees can raise~\cite{alstrup00range}
a $d$-dimensional solution to a solution in $d+1$ dimensions, paying a
factor $O(\lg n / \lg\lg n)$ in time and space. It is generally
believed that this cost for each additional the dimension is optimal.
Unfortunately, we cannot prove optimal lower bounds for large $d$,
since current lower bound techniques cannot show bounds exceeding
$\Omega(\lg n / \lg\lg n)$. Then, it remains to ask about optimal
bounds for small dimension.

Until recently, it seemed safe to conjecture that 3D range reporting
would require $\Omega(\lg n / \lg\lg n)$ query time for space
$O(n\cdot \polylog n)$. Indeed, a common way to design a
$d$-dimensions static data structure is to perform a plane sweep on
one coordinate, and maintain a dynamic data structure for $d-1$
dimensions. The data structure is then made persistent, transforming
update time into space. But it was known, via the marked ancestor
problem~\cite{alstrup98marked}, that dynamic 2D range reporting
requires $\Omega(\lg n / \lg\lg n)$ query time. Thus, static 3D
reporting was expected to require a similar query time.

However, this conjecture was refuted by a recent result of
Nekrich~\cite{nekrich07range} from SoCG'07. It was shown that 3D range
reporting can be done in doubly-logarithmic query time, specifically
$t_q = O(\lg^2 \lg n)$. Without threatening the belief that
\emph{ultimately} the bounds should grow by $\Theta(\lg n / \lg\lg n)$
per dimension, this positive result raised the intriguing question
whether further dimensions might also collapse to nearly constant time
before this exponential growth begins. 

Why would 4 dimensions be hard, if 3 dimensions turned out to be easy?
The question has a simple, but fascinating answer: butterfly graphs.
By reduction from reachability on butterfly graphs, we show in
\S\ref{sec:structure} that the gap between 3 and 4 dimensions
must be $\Omegat(\lg n)$:

\begin{theorem}  \label{thm:rep4d}
A data structure for range reporting in 4 dimensions using space $S$
in the cell probe model with $w$-bit cells, requires query time
$\Omega(\lg n / \lg \frac{Sw}{n})$.
\end{theorem}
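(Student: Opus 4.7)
The plan is to reduce reachability queries on subgraphs of a butterfly to 4D range reporting (emptiness), so that Theorem~\ref{thm:oracles} transfers directly. Given a subgraph $G$ of the butterfly with $L$ layers and $2^L$ nodes per layer, I would construct a set of points in $\mathbb{R}^4$ so that any source-to-sink reachability query in $G$ is answered by a single range reporting query on that point set.

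The structural fact I would exploit is the \emph{unique-path} property of the butterfly: labelling nodes at each layer by bit-strings in $\{0,1\}^L$, the path from source $u = u_1 \cdots u_L$ (layer $0$) to sink $v = v_1 \cdots v_L$ (layer $L$) is unique, and at layer $i$ it passes through the node $v_1 \cdots v_i u_{i+1} \cdots u_L$. Hence the edge crossing from layer $i-1$ to layer $i$ on this canonical path is determined entirely by the prefix $v_1\cdots v_i$ and the suffix $u_{i+1}\cdots u_L$, and $u$ reaches $v$ in $G$ iff none of the missing edges of $G$ lies on the canonical $u \to v$ path.

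Then I would encode each missing edge $e$, characterised by prefix $p$ of length $i$ and suffix $s$ of length $L-i$, as a point in $\mathbb{R}^4$. The first two coordinates capture the interval of sinks $v$ whose first $i$ bits are $p$: take $x_1 = p\cdot 0^{L-i}$ and $x_2 = p\cdot 1^{L-i}$, read as integers. For the suffix condition I would bit-reverse $u$ to $u' = u_L u_{L-1}\cdots u_1$; then ``last $L-i$ bits of $u$ equal $s$'' becomes ``$u'$ starts with the reverse of $s$,'' an interval whose endpoints form $x_3, x_4$. With this encoding, the missing edge hits the canonical $u\to v$ path iff the 4D rectangle $(-\infty,v]\times[v,\infty)\times(-\infty,u']\times[u',\infty)$ contains the point $(x_1,x_2,x_3,x_4)$. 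Emptiness of this rectangle (decided by asking the reporting structure for at most one point) then answers reachability.

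To obtain the stated bound I would choose $L \approx \lg n - \lg\lg n$ so that the butterfly has $\Theta(n)$ edges, hence at most $n$ missing edges, matching the 4D instance size, while $\lg n_{\mathrm{bf}} = \Theta(\lg n)$. Plugging these parameters into Theorem~\ref{thm:oracles} gives the claimed $\Omega(\lg n / \lg\frac{Sw}{n})$ bound on query time. The main delicate point is getting both path-membership conditions to become coordinate-aligned intervals simultaneously: the prefix condition on $v$ is naturally an interval, but the suffix condition on $u$ is not, and it is precisely the bit-reversal that turns it into one. After that, the remaining work is a parameter-matching computation, which must be checked to lose at most an additive $O(\lg\lg n)$ inside the logarithm in the denominator---an amount that is absorbed into the asymptotic bound.
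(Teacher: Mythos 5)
Your reduction is essentially the paper's own proof: the paper routes reachability in the butterfly through 2D stabbing (each missing edge becomes a rectangle whose sides are the interval of sinks sharing its prefix and the interval of sources sharing its suffix, using exactly your digit-reversal trick to make the latter an interval) and then applies the standard stabbing-to-reporting reduction that turns each rectangle into a 4D point and the query into a four-sided dominance range. The only cosmetic difference is that you phrase everything for a binary butterfly while the hard instances behind Theorem~\ref{thm:oracles} have large degree $B$; your encoding generalizes verbatim from bits to base-$B$ digits, so this is the same argument.
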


For the main case $w = O(\lg n)$ and $S = n\cdot \polylog n$, the
query time must be $\Omega(\lg n / \lg\lg n)$. This is almost tight,
since the result of Nekrich implies an upper bound of $O(\lg n \lg\lg
n)$.

\paragraph{Range stabbing in 2D.}
In fact, our reduction from reachability oracles to 4D range reporting
goes through 2D range stabbing, for which we obtain the same bounds as
in Theorem~\ref{thm:rep4d}. There exists a simple reduction from 2D
stabbing to 2D range reporting, and thus, we recover our lower bounds
for range reporting~\cite{patrascu07sum2d}, with a much simpler proof.

\subsection{Lower Bounds for Set Disjointness}

In the set disjointness problem, Alice and Bob receive sets $S$ and
$T$, and must determine whether $S \cap T = \emptyset$. We
parameterize \emph{lopsided set disjointness} (LSD) by the size of
Alice's set $|S| = N$, and $B$, the fraction between the universe and
$N$. In other words, $S, T \subseteq [N\cdot B]$. We do not impose an
upper bound on the size of $T$, i.e.~$|T| \le N\cdot B$.

Symmetric set disjointness is a central problem in communication
complexity. While a deterministic lower bound is easy to prove, the
optimal randomized lower bound was shown in the celebrated papers of
Razborov~\cite{razborov92disj} and Kalyanasundaram and
Schnitger~\cite{kalyana92disj}, dating to 1992.  Bar-Yossef et
al.~\cite{baryossef04disj} gave a more intuitive information-theoretic
view of the technique behind these proofs.

In their seminal paper on asymmetric communication complexity,
Miltersen et al.~\cite{miltersen99asymmetric} proved an (easy)
deterministic lower bound for LSD, and left the randomized lower bound
as an ``interesting'' open problem. 

In FOCS'06, we showed~\cite{andoni06eps2n} a randomized LSD lower
bound for the case when $B \ge \poly(N)$. For such large universes, it
suffices to consider an independent distribution for Alice's and Bob's
inputs, simplifying the proof considerably.

In this paper, we show how to extend the techniques for symmetric set
disjointness to the asymmetric case, and obtain an optimal randomized
bound in all cases:

\begin{theorem}   \label{thm:rand-lsd}
Fix $\delta > 0$. In a bounded error protocol for LSD, either Alice
sends at least $\delta N\lg B$ bits, or Bob sends at least $N
B^{1-O(\delta)}$ bits.
\end{theorem}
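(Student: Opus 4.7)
I would follow the information-complexity paradigm pioneered by Bar-Yossef et al.\ \cite{baryossef04disj} for symmetric set disjointness, adapted to exploit the lopsidedness. The argument factors into a direct-sum step and a single-coordinate primitive lower bound; the principal difficulty is obtaining a sharp asymmetric bound for the primitive.

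First I would set up the hard distribution. Partition the universe $[NB]$ into $N$ blocks of size $B$. On a single block, consider the primitive problem in which Alice holds an element $x\in[B]$ and Bob holds $y\subseteq[B]$, and they must decide whether $x\in y$. Following \cite{baryossef04disj}, I would use a collapsing product distribution $\mu$: a public coin $D_i\in\{A,B\}$ selects, for block $i$, whether Alice's block-element is uniform on $[B]$ and Bob's block-set is empty, or conversely Alice's element is a fixed dummy and Bob's set is drawn from a carefully chosen distribution on subsets of $[B]$. Under $\mu^{\otimes N}$ the joint instance is always disjoint, which gives the ``collapsing'' feature needed for the direct-sum argument. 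The actual hard instance is a $1/2$--$1/2$ mixture of $\mu^{\otimes N}$ with a planted intersection in a uniformly random block, so that it is balanced between YES and NO. Conditioning on $D_1,\dots,D_N$ makes the blocks independent; by the standard conditional-information decomposition, any bounded-error LSD protocol then has information cost at least $N$ times the conditional information cost of the primitive, with Alice's and Bob's information tracked separately.

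The heart of the proof is the asymmetric primitive lower bound: for any bounded-error protocol for the single-block problem, either Alice's expected information is $\Omega(\delta\lg B)$ or Bob's is $\Omega(B^{1-O(\delta)})$. The intuition is that if Alice reveals $o(\lg B)$ bits then her element is essentially uniform over a residual support of size $B^{1-o(1)}$, and for Bob to decide membership on this support he must effectively transmit that many bits of his indicator vector. I would make this rigorous by extending the Hellinger-distance / cut-and-paste machinery of \cite{razborov92disj,kalyana92disj,baryossef04disj}: bound the statistical distance between the transcript on a YES-planted input and on a collapsing input by a sum of two contributions, one from Alice's messages weighted by roughly $1/\lg B$ and one from Bob's weighted by $1/B^{1-O(\delta)}$, and pin this against the $\Omega(1)$ distance required for correctness. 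Summing over the $N$ blocks and converting conditional information to communication via $I\le \HH \le |\Pi|$ yields the theorem.

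\paragraph{Main obstacle.} The symmetric analysis of \cite{baryossef04disj} yields only an $\Omega(1)$ information bound per coordinate and is insensitive to which party speaks, whereas here one needs to charge Alice's and Bob's information at very different rates. The key new ingredient will be an asymmetry-aware inequality controlling the per-message drop in Hellinger distance, exploiting that Alice's coordinate has entropy $\lg B$ while Bob's has entropy $\Theta(B)$. Calibrating the constants so that only an $O(\delta)$ loss appears in the exponent of $B$ (rather than, say, losing a factor of $\lg B$ as in the $B\ge\poly(N)$ argument of \cite{andoni06eps2n}) is where the substantive work lies; once the primitive bound is in place, the distributional and direct-sum steps are routine adaptations of prior work.
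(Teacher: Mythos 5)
Your scaffolding---partition into $N$ blocks, condition on a selector variable to make the blocks independent, decompose the mutual information coordinate-wise, and reduce to a single-block membership (indexing) primitive---is essentially the paper's direct-sum argument, and that part is sound. The gap is that the entire substance of the theorem lives in the asymmetric single-block bound, and you have not proved it: you explicitly defer the ``asymmetry-aware inequality controlling the per-message drop in Hellinger distance'' as the main obstacle, and it is far from clear that such an inequality exists. The Hellinger/cut-and-paste machinery of Bar-Yossef et al.\ is calibrated to produce an $\Omega(1)$ information bound per coordinate and is fundamentally insensitive to which player the information comes from; what you need is a statement of the form ``if Alice's information is below $\delta\lg B$ \emph{and} Bob's is below $B^{1-O(\delta)}$, then the error is $\Omega(1)$,'' with two wildly different rates for the two players. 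Asserting that the transcript's statistical distance decomposes into an Alice-term weighted by $1/\lg B$ and a Bob-term weighted by $1/B^{1-O(\delta)}$ is essentially the theorem restated, not an argument for it.

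The paper resolves this step by a different mechanism, which is the actual content of the proof. After the direct sum, it converts the low \emph{information} complexity of the block-$k$ protocol into a combinatorial statement: by Markov-fixing the public coins and then the transcript, it exhibits a monochromatic ``yes'' rectangle $\sigma\times\tau$ on which $S_k$ retains entropy at least $\log_2\frac{B}{2}-\frac{256a}{N}$ and $T_k$ retains entropy at least $(\frac{B}{2}-1)-\frac{256b}{N}$. The heart is then a corruption-style rectangle lemma (Lemma~\ref{lem:rectangle}): any rectangle with these entropy guarantees must have $\Pr[S_k\cap T_k\neq\emptyset]>\frac{1}{42}$. Its proof is elementary: restrict to the non-heavy elements $\calS^\star$ of $\sigma$, note that $f(T_k)=\sum_{x\in T_k}\mu_\sigma(x)$ concentrates around $\frac{1}{2}$ by a Chernoff bound when $T_k$ is uniform over the structured family $\calT_k$ (one element chosen from each pair $\{2k,2k+1\}$), and use the entropy deficiency of $\tau$ to show that $\tau$ cannot concentrate on the exponentially few $T_k$ with small $f$. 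Note also that the paper's hard distribution keeps Bob's block-sets of size exactly $B/2$ with this pair structure, rather than empty in half the blocks as in your collapsing distribution; that structure is what makes $f(T_k)$ a sum of $B/2$ independent bounded terms and hence Chernoff-concentrated. Without either this rectangle lemma or a genuinely new asymmetric information inequality, your proof does not go through.
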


The proof appears in \S\ref{sec:lsd-proof}, and is fairly
technical. If one is only interested in deterministic lower bounds,
the proof of Miltersen et al.~\cite{miltersen99asymmetric} suffices;
this proof is a one-paragraph counting argument. If one wants
randomized lower bounds for partial match and near-neighbor problems,
it suffices to use the simpler proof of \cite{andoni06eps2n}, since
those reductions work well with a large universe. Randomized lower
bounds for reachability oracles and the entire left subtree of
Figure~\ref{fig:reductions} require small universes ($B \ll N$), and
thus need the entire generality of Theorem~\ref{thm:rand-lsd}.

\paragraph{Organization.}
The reader unfamiliar with our problems is first referred to
Appendix~\ref{app:catalog}, which defines all problems, and summarizes
the known reductions (the dashed lines in Figure~\ref{fig:reductions}).

The remainder of this paper is organized as a bottom-up, level
traversal of the tree in Figure~\ref{fig:reductions}. (We find that
this ordering builds the most intuition for the results.)

In \S\ref{sec:structure}, we explain why butterfly graphs capture the
structure hidden in many problems, and show reductions to dynamic
marked ancestor, and static 2D stabbing.

In \S\ref{sec:special-lsd}, we consider some special cases of the LSD
problem, which are shown to be as hard as the general case, but are
easier to use in reductions. Subsequently, \S\ref{sec:pm} and
\S\ref{sec:butterfly} reduce set disjointness to partial match,
respectively reachability oracles.

Finally, \S\ref{sec:lsd-proof} gives the proof of our optimal LSD
lower bound.

\begin{figure}
    \centering
  \begin{tikzpicture}[scale=0.4]
    \foreach \x in { 0, ..., 15 } {
      \fill (\x, 12) circle (0.15);
      \foreach \y in { 0, 3, 6, 9 } {
	\fill (\x, \y) circle (0.15);
	\FPeval{ly}{\y + 0.15}
        \FPeval{uy}{\y + 2.85}
	\draw  (\x, \uy) -- (\x, \ly) ;
      }
    }
    \foreach \x in { 0, ..., 7 } {
      \FPeval{even}{\x * 2}   \FPeval{odd}{\x * 2 + 1}
      \draw (\even, 11.84) -- (\odd, 9.16) ;
      \draw (\odd, 11.84) -- (\even, 9.16) ;
    }
    \foreach \hix in { 0, ..., 3 }
      \foreach \lox in { 0, 1 } {
	\FPeval{even}{\hix * 4 + \lox}   \FPeval{odd}{\hix * 4 + 2 + \lox}
	\draw (\even, 8.84) -- (\odd, 6.16) ;
	\draw (\odd, 8.84) -- (\even, 6.16) ;
      }
    \foreach \hix in { 0, 1 }
      \foreach \lox in { 0, ..., 3 } {
	\FPeval{even}{\hix * 8 + \lox}   \FPeval{odd}{\hix * 8 + 4 + \lox}
	\draw (\even, 5.84) -- (\odd, 3.16) ;
	\draw (\odd, 5.84) -- (\even, 3.16) ;
      }
    \foreach \x in { 0, ..., 7 } {
      \FPeval{even}{\x}   \FPeval{odd}{\x + 8}
      \draw [->] (\even, 2.84) -- (\odd, 0.16) ;
      \draw [->] (\odd, 2.84) -- (\even, 0.16) ;
    }
  \end{tikzpicture}
  \caption{ \mbox{A butterfly with degree $2$ and depth $4$.} }
  \label{fig:butterfly}
\vspace{-3ex}
\end{figure}

\section{The Butterfly Effect}    \label{sec:structure}

The butterfly is a well-known graph structure with high ``shuffle
abilities.''  The graph (Figure~\ref{fig:butterfly}) is specified by
two parameters: the degree $b$, and the depth $d$. The graph has $d+1$
layers, each having $b^d$ vertices. The vertices on level $0$ are
sources, while the ones on level $d$ are sinks. Each vertex except the
sinks has out-degree $d$, and each vertex except the sources has
in-degree $d$. If we view vertices on each level as vectors in
$[b]^d$, the edges going out of a vertex on level $i$ go to vectors
that may differ only on the $i$th coordinate. This ensures that there
is a unique path between any source and any sink: the path ``morphs''
the source vector into the sink node by changing one coordinate at
each level.

For convenience, we will slightly abuse terminology and talk about
``reachability oracles for $G$,'' where $G$ is a butterfly graph. This
problem is defined as follows: preprocess a subgraph of $G$, to answer
queries of the form, ``is sink $v$ reachable from source $u$?'' The
query can be restated as, ``is any edge on the unique source--sink path
missing from the subgraph?''

\subsection{Reachability Oracles to Stabbing}

The reduction from reachability oracles to stabbing is very easy to
explain formally, and we proceed to do that now. However, there is a
deeper meaning to this reduction, which will be explored in
\S\ref{sec:persist}.

\begin{reduction}
Let $G$ be a butterfly with $M$ edges. The reachability oracle problem
on $G$ reduces to 2-dimensional stabbing over $M$ rectangles.
\end{reduction}

\begin{proof}
If some edge of $G$ does not appear in the subgraph, what source-sink
paths does this cut off? Say the edge is on level $i$, and is between
vertices $(\cdots, v_{i-1}, v_i, v_{i+1}, \cdots)$ and $(\cdots,
v_{i-1}, v'_i, v_{i+1}, \cdots)$. The sources that can reach this edge
are precisely $(\star, \cdots, \star, v_i, v_{i+1}, \cdots)$, where
$\star$ indicates an arbitrary value. The sinks that can be reached
from the edge are $(\cdots, v_{i-1}, v'_i, \star, \cdots)$. The
source--sink pairs that route through the missing edge are the
Cartesian product of these two sets.

This Cartesian product has precisely the format of a 2D rectangle.  If
we read a source vector $(v_1, \dots, v_d)$ as a number in base $b$
with the most significant digit being $v_d$, the set of sources that
can reach the edge is an interval of length $b^{i-1}$. Similarly, a
sink is treated as a number with the most significant digit $v_1$,
giving an interval of length $b^{d-i}$. 

For every missing edge, we define a rectangle with the source and sink
pairs that route through it. Then, a sink is reachable from a source
iff no rectangle is stabbed by the (sink, source) point.
\end{proof}

Observe that the rectangles we constructed overlap in complicated
ways. This is in fact needed, because 2-dimensional range stabbing
with non-overlapping rectangles can be solved with query time
$O(\lg^2\lg n)$ \cite{deberg95ptloc}.

As explained in Appendix~\ref{app:catalog}, 2D range stabbing reduces
to 2D range counting and 4D range reporting.

\subsection{The Structure of Dynamic Problems}   \label{sec:persist}

The more interesting reduction is to the marked ancestor problem.  The
goal is to convert a solution to the dynamic problem into a solution
to \emph{some} static problem for which we can prove a lower bound. 

A natural candidate would be to define the static problem to be the
persistent version of the dynamic problem. Abstractly, this is defined
as follows:
\begin{description}
\item[input:] an (offline) sequence of updates to a dynamic problem,
  denoted by $u_1, \dots, u_m$.
\item[query:] a query $q$ to dynamic problem and a time stamp $\tau
  \le m$. The answer should be the answer to $q$ if it were executed
  by the dynamic data structure after updates $u_1, \dots, u_\tau$.
\end{description}

An algorithm result for making data structures persistent can be used
to imply a lower bound for the dynamic problem, based on a lower bound
for the static problem. The following is a standard persistence
result:

\begin{lemma}  \label{lem:persist}
If a dynamic problem can be solved with update time $t_u$ and query
time $t_q$, its (static) persistent version will have a solution with
space $O(m\cdot t_u)$ and query time $O(t_q \cdot \lg\lg (m\cdot
t_u))$.
\end{lemma}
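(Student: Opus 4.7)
The plan is to construct the persistent data structure by running the dynamic algorithm offline on $u_1, \dots, u_m$ and recording, for each memory cell that is ever touched, the entire history of values written to it, together with a predecessor structure over timestamps so that a query at time $\tau$ can recover the correct ``snapshot'' value in each probe.

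More concretely, I would proceed as follows. First, simulate the dynamic data structure on the offline sequence $u_1, \dots, u_m$. Since each update writes at most $t_u$ cells, the total number of (address, time, value) triples generated is at most $m \cdot t_u$. Group these triples by address: for each address $a$ ever written, let $L_a$ be the list of pairs $(\tau_i, v_i)$, sorted by $\tau_i$, describing the sequence of values held at $a$. The total size across all $L_a$ is $O(m \cdot t_u)$, matching the desired space bound. Now, to answer a persistent query $(q, \tau)$, simulate the dynamic query algorithm; whenever it wants to probe cell $a$, we must return the value stored at $a$ just after update $u_\tau$, which is precisely the answer to the predecessor query ``largest $\tau_i \le \tau$ in $L_a$.'' Equipping each $L_a$ with a static predecessor structure over the timestamp universe $[m]$ costs $O(|L_a|)$ space and allows each lookup to be performed in $O(\lg\lg m)$ time in the cell-probe model with word size $\Omega(\lg m)$, e.g.\ via the van Emde Boas layout or a $y$-fast trie. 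Because $t_u \le \poly(m)$, we have $\lg\lg m = \Theta(\lg\lg(m \cdot t_u))$, which gives the claimed query time of $O(t_q \cdot \lg\lg(m \cdot t_u))$.

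The only subtlety is handling the simulation correctly: the dynamic algorithm's behavior on update $u_i$ depends on the values it reads, so we must fix the sequence of probes by actually running the dynamic algorithm forward in time once, which determines all the $L_a$'s. After that, the persistent query is entirely driven by the (randomized or deterministic) query algorithm of the dynamic structure; each of its $t_q$ probes is answered by one predecessor search rather than one direct cell read.

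The main obstacle, such as it is, is simply the predecessor step: we need a static structure that simultaneously achieves linear space and $O(\lg\lg m)$ query time so that the aggregate space stays $O(m \cdot t_u)$ and only a $\lg\lg$ factor is lost per probe. This is exactly what van Emde Boas / $y$-fast tries deliver on a universe of polynomial size, so no genuinely new idea is needed. Everything else — the offline simulation, the bound on total writes, and the substitution of probes by predecessor queries — is routine.
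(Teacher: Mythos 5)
Your proof is correct and follows exactly the paper's approach: simulate the updates offline, record each cell's write history with timestamps, attach a van Emde Boas predecessor structure per cell, and replace each of the query algorithm's $t_q$ probes with a predecessor search over timestamps. The only difference is cosmetic — you spell out the linear-space/$O(\lg\lg)$-time predecessor step and the $\lg\lg m$ vs.\ $\lg\lg(m t_u)$ equivalence, which the paper leaves implicit.
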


\begin{proof}
We simulate the updates in order, and record their cell writes. Each
cell in the simulation has a collection of values and timestamps
(which indicate when the value was updated). For each cell, we build a
van Emde Boas predecessor structure~\cite{vEB77pred} over
the time-stamps. The structures occupy $O(m\cdot t_u)$ space in total,
supporting queries in $O(\lg\lg (mt_u))$ time. To simulate the query,
we run a predecessor query for every cell read, finding the last update
that changed the cell before time $\tau$.
\end{proof}

Thus, if the static problem is hard, so is the dynamic problem (to
within a doubly logarithmic factor). However, the reverse is not
necessarily true, and the persistent version of marked ancestor turns
out to be easy, at least for the incremental case. To see that,
compute for each node the minimum time when it becomes marked. Then,
we can propagate down to every leaf the minimum time seen on the
root-to-leaf path. To query the persistent version, it suffices to
compare the time stamp with this value stored at the leaf.

\smallskip

As it turns out, persistence is still the correct intuition for
generating a hard static problem. However, we need the stronger notion
of full persistence. In partial persistence, as seen above, the
updates create a linear chain of versions (an update always
affects the more recent version). In full persistence, the updates
create a \emph{tree} of versions, since updates are allowed to modify
any historic version.

For an abstract dynamic problem, its fully-persistent version is
defined as follows:
\begin{description}
\item[input:] a rooted tree (called the \emph{version tree}) in which
  every node is labeled with a sequence of update operations. The total
  number of updates is $m$.

\item[query:] a query $q$ to the dynamic problem, and a node $\tau$ of
  the version tree. The answer should be the answer to $q$ if it were
  executed after the sequence of updates found on the path through the
  version tree from the root to $\tau$.
\end{description}

Like the partially persistent problem, the fully persistent one can be
solved by efficient simulation of the dynamic problem:

\begin{lemma}  \label{lem:full-persist}
If a dynamic problem can be solved with update time $t_u$ and query
time $t_q$, the fully-persistent static problem has a solution with
space $O(m\cdot t_u)$ and query time $O(t_q \lg\lg (m\cdot t_u))$.
\end{lemma}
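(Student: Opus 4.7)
The plan is to generalize the proof of Lemma~\ref{lem:persist} by running the dynamic algorithm along a depth-first traversal of the version tree, and encoding the resulting history of each cell as a one-dimensional predecessor problem in ``DFS time''.

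First I would perform a DFS of the version tree, labelling each node $\sigma$ with an entry time $\text{in}(\sigma)$ and an exit time $\text{out}(\sigma)$, so that $\sigma$ is an ancestor of $\tau$ iff $\text{in}(\sigma) \le \text{in}(\tau) \le \text{out}(\sigma)$. I would then simulate the dynamic data structure along this traversal: upon entering $\sigma$ I apply $\sigma$'s sequence of updates, recording for each cell write the previous contents; upon leaving $\sigma$ I restore those previous contents. This keeps the simulated memory continuously in the state a true execution would have at the DFS's current node, so each update algorithm reads exactly the memory it would see in the dynamic problem.

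For every cell $c$, this process produces a sequence of $(t,\text{val})$ events: at $t=\text{in}(\sigma)$ an event recording the new value written by one of $\sigma$'s updates, and at $t=\text{out}(\sigma)$ an event recording the value restored. I store each cell's list in a van Emde Boas predecessor structure keyed by $t$. Each of the $m$ updates contributes $O(t_u)$ writes and hence $O(t_u)$ pairs of events, so the total space is $O(m\cdot t_u)$; the timestamp universe has size $O(m\cdot t_u)$, so each predecessor query costs $O(\lg\lg(m\cdot t_u))$. To answer a query $(q,\tau)$, I simulate the dynamic query algorithm for $q$ and replace each cell probe of $c$ by a predecessor query for $\text{in}(\tau)$ in $c$'s structure; the total cost is $O(t_q\lg\lg(m\cdot t_u))$, as claimed.

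The real content is the correctness of these predecessor queries. I would prove it via the invariant that at every moment of the DFS the simulated memory equals what the dynamic algorithm would produce along the path from the root to the currently visited node; at time $\text{in}(\tau)$ this is exactly the memory state at version $\tau$. Since every cell's event list records precisely the moments at which the simulated value of $c$ changes, the predecessor of $\text{in}(\tau)$ necessarily reports the value of $c$ at version $\tau$. The point to check carefully is that write events from branches that are not ancestors of $\tau$ are always superseded by their matching restore event before time $\text{in}(\tau)$ (or else occur entirely after it and are thus not returned), which I expect to be the main---but routine---step. Once this is in hand the simulated query visits the same cells and returns the same answer as a dynamic execution at version $\tau$, and the space and time bounds above complete the proof.
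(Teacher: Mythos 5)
Your proof is correct and is essentially the paper's argument: the paper reduces the per-cell lookup to a longest-matching-prefix query over the version tree and invokes its equivalence with static predecessor search, and your DFS in/out-time linearization with paired write/restore events is exactly the standard instantiation of that equivalence. The only added content in your write-up is spelling out the simulation invariant and the interval-endpoint bookkeeping, which the paper leaves implicit.
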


\begin{proof}
For each cell of the simulated machine, consider the various nodes of
the version tree in which the cell is written. Given a ``time stamp''
(node) $\tau$, we must determine the most recent change that happened
on the path from $\tau$ to the root. This is the longest matching
prefix problem, which is equivalent to static predecessor
search. Thus, the simulation complexity is the same as in
Lemma~\ref{lem:persist}.
\end{proof}

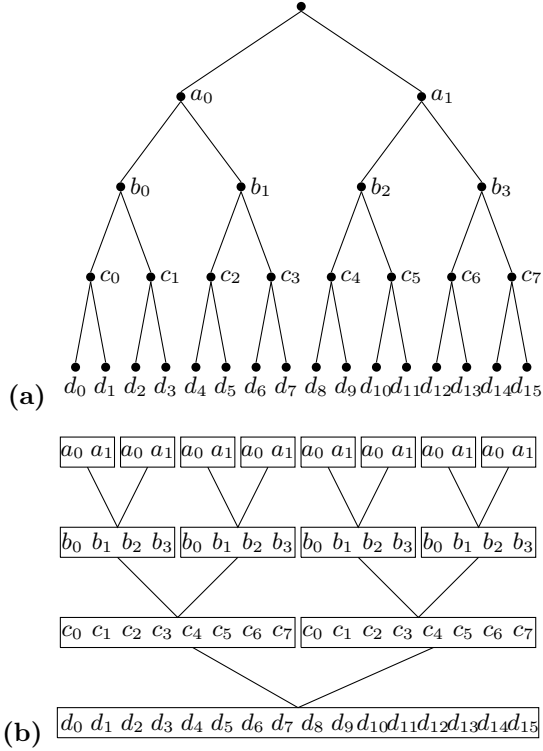
\begin{figure*}
    \centering
%  \begin{tabular}{c@{\hspace{1cm}}c}
  {\bf (a)}
  \begin{tikzpicture}[scale=0.4]
    \fill (7.5, 12) circle (.15);
    \draw (7.5, 11.84) -- (3.5, 9.16);
    \draw (7.5, 11.84) -- (11.5, 9.16);
    \foreach \i in { 0, 1 } {
      \FPeval{\x}{3.5 + 8 * \i}
      \fill (\x, 9) circle (.15) node[right] {{\small $a_{\i}$}}; 
      \FPeval{\lx}{\x - 2}  \FPeval{\rx}{\x + 2}  
      \draw (\x, 8.84) -- (\lx, 6.16);
      \draw (\x, 8.84) -- (\rx, 6.16);
    }
    \foreach \i in { 0, ..., 3 } {
      \FPeval{\x}{1.5 + 4 * \i}
      \fill (\x, 6) circle (.15) node[right] { {\small $b_{\i}$ }}; 
      \FPeval{\lx}{\x - 1}  \FPeval{\rx}{\x + 1}
      \draw (\x, 5.84) -- (\lx, 3.16);
      \draw (\x, 5.84) -- (\rx, 3.16);
    }
    \foreach \i in { 0, ..., 7 } {
      \FPeval{\x}{0.5 + 2 * \i}
      \fill (\x, 3) circle (.15) node[right] { {\small $c_{\i}$ }}; 
      \FPeval{\lx}{\x - 0.5}  \FPeval{\rx}{\x + 0.5}
      \draw (\x, 2.84) -- (\lx, 0.16);
      \draw (\x, 2.84) -- (\rx, 0.16);
    }
    \foreach \x in { 0, ..., 15 }
      \fill (\x, 0) circle(.15) node[below] { {\small $d_{\x}$}};
  \end{tikzpicture}
%&
\\[2ex]
  {\bf (b)}
  \begin{tikzpicture}[scale=0.4]
    \draw (-0.5, 1) -- (15.5, 1) -- (15.5, 2) -- (-0.5, 2) -- (-0.5, 1);
    \foreach \x in { 0, ..., 15 }
      \draw (\x, 1.5) node { {\small $d_{\x}$}};
    \draw (4, 4) -- (7.5, 2) -- (12, 4);
    \foreach \i in { 0, 1 } {
      \FPeval{\lx}{\i * 8 - 0.4}  \FPeval{\rx}{(\i+1) * 8 - 0.6}
      \draw (\lx, 4) -- (\rx, 4) -- (\rx, 5) -- (\lx, 5) -- (\lx, 4);
      \foreach \x in { 0, ..., 7 } {
	 \FPeval{\nx}{\i * 8 + \x}
         \draw (\nx, 4.5) node { {\small $c_{\x}$}};
      }
      \FPeval{\mx}{(\lx + \rx) / 2} 
      \FPeval{\mxl}{\mx - 2} \FPeval{\mxr}{\mx + 2}
      \draw (\mxl, 7) -- (\mx, 5) -- (\mxr, 7);
    }
    \foreach \i in { 0, ..., 3 } {
      \FPeval{\lx}{\i * 4 - 0.4}  \FPeval{\rx}{(\i+1) * 4 - 0.6}
      \draw (\lx, 7) -- (\rx, 7) -- (\rx, 8) -- (\lx, 8) -- (\lx, 7);
      \foreach \x in { 0, ..., 3 } {
	 \FPeval{\nx}{\i * 4 + \x}
         \draw (\nx, 7.5) node { {\small $b_{\x}$}};
      }
      \FPeval{\mx}{(\lx + \rx) / 2} 
      \FPeval{\mxl}{\mx - 1} \FPeval{\mxr}{\mx + 1}
      \draw (\mxl, 10) -- (\mx, 8) -- (\mxr, 10);
    }
    \foreach \i in { 0, ..., 7 } {
      \FPeval{\lx}{\i * 2 - 0.4}  \FPeval{\rx}{(\i+1) * 2 - 0.6}
      \draw (\lx, 10) -- (\rx, 10) -- (\rx, 11) -- (\lx, 11) -- (\lx, 10);
      \foreach \x in { 0, 1 } {
	 \FPeval{\nx}{\i * 2 + \x}
         \draw (\nx, 10.5) node { {\small $a_{\x}$}};
      }
    }
  \end{tikzpicture}
%  \end{tabular}
  \caption{ (a) The marked ancestor problem.~~ (b) An instance of 
    fully-persistent marked ancestor.}
  \label{fig:marked}
\end{figure*}

We now have to prove a lower bound for the fully-persistent version of
marked ancestor, which we accomplish by a reduction from reachability
oracles in the butterfly:

\begin{reduction}   \label{red:full-persist}
Let $G$ be a subgraph of a butterfly with $M$ edges. The reachability
oracle problem on $G$ reduces to the fully-persistent version of the
marked ancestor problem, with an input of $O(M)$ offline updates. The
tree in the marked ancestor problem has the same degree and depth as
the butterfly.
\end{reduction}

\begin{proof}
Our inputs to the fully-persistent problem have the pattern
illustrated in Figure~\ref{fig:marked}. At the root of the version
tree, we have update operations for the leaves of the marked ancestor
tree. If we desire a lower bound for the incremental marked ancestor
problems, all nodes start unmarked, and we have an update for every
leaf that needs to be marked. If we want a decremental lower bound,
all nodes start marked, and all operations are \emph{unmark}.

The root has $b$ subversions; in each subversion, the level above the
leaves in the marked ancestor tree is updated. The construction
continues similarly, branching our more versions at the rate at which
level size decreases. Thus, on each level of the version tree we have
$b^d$ updates, giving $b^d \cdot d$ updates in total.

With this construction of the updates, the structure of the fully
persistent marked ancestor problem is isomorphic to a
butterfly. Imagine what happens when we query a leaf $v$ of the marked
ancestor tree, at a leaf $t$ of the version tree. We think of both $v$
and $t$ as vectors in $[b]^d$, spelling out the root to leaf paths.
The path from the root to $v$ goes through every level of
the version tree:
\begin{itemize}
\item on the top level, there is a single version ($t$ is irrelevant),
  in which $v$ is updated.

\item on the next level, the subversion we descend to is decided by
  the first coordinate of $t$. In this subversion, $v$'s parent is
  updated. Note that $v$'s parent is determined by the first $d-1$
  coordinates of $v$.

\item on the next level, the relevant subversion is dictated by the
  first two coordinates of $t$. In this subversion, $v$'s grandparent
  is updated, which depends on the first $d-2$ coordinates of $v$.

\item etc.
\end{itemize}

\noindent
This is precisely the definition of a source-to-sink path in the
butterfly graph, morphing the source into the sink one coordinate at a
time. Each update will mark a node if the corresponding edge in the
butterfly is missing in the subgraph. Thus, we encounter a marked
ancestor iff some edge is missing.
\end{proof}

Let us see how Reduction~\ref{red:full-persist} combines with
Lemma~\ref{lem:full-persist} to give a lower bound for the dynamic
marked ancestor problem. Given a butterfly graph with $m$ edges, we
generate at most $m$ updates. From Lemma~\ref{lem:full-persist}, the
space of the fully persistent structure is $S = O(m\cdot t_u)$, and
the query time $O(t_q \lg\lg (mt_q))$, where $t_u$ and $t_q$ are the
assumed running times for dynamic marked ancestor. If $t_u = \polylog
m$, the space is $S = O(m\polylog m)$.

The lower bound for reachability oracles from
Theorem~\ref{thm:oracles} implies that for space $O(m\polylog m)$, the
query time must be $\Omega\big( \frac{\lg m}{\lg\lg m} \big)$. But we
have an upper bound of $O(t_q \lg\lg (mt_q))$ for the query time, so
$t_q = \Omega\big( \frac{\lg m}{\lg^2 \lg m} \big)$. This is weaker by
a $\lg\lg m$ factor compared to the original bound of
\cite{alstrup98marked}.

\section{Adding Structure to Set Disjointness}   \label{sec:special-lsd}

Just as it is more convenient to work with Planar-NAE-3SAT that
Circuit-SAT for showing NP-completeness, our reductions use two
restricted versions of LSD:
\begin{description}
\item[Blocked-LSD:] The universe is interpreted as $[N] \times [B]$,
  and elements as pairs $(u,v)$. It is guaranteed that $(\forall) x
  \in [N]$, $S$ contains a single element of the form $(x, \star)$.

\item[2-Blocked-LSD:] The universe is interpreted as $[\frac{N}{B}]
  \times [B] \times [B]$. It is guaranteed that for all $x \in
  [\frac{N}{B}]$ and $y\in [B]$, $S$ contains a single element of the
  form $(x, y, \star)$ and a single element of the form $(x, \star,
  y)$.
\end{description}

It is possible to reanalyze the lower bound of \S\ref{sec:lsd-proof}
and show directly that it applies to these restricted
versions. However, in the spirit of the paper, we choose to design a
reduction from general LSD to these special cases.

\begin{lemma}
LSD reduces to Blocked-LSD by a deterministic protocol with
communication complexity $O(N)$.
\end{lemma}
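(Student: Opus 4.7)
The plan is to have Alice reveal only the ``block histogram'' of her set---how many of her elements fall in each length-$B$ block of $[NB]$---and for both parties to use this $O(N)$-bit message to re-encode the LSD instance as a Blocked-LSD instance on the same universe $[N] \times [B]$, while guaranteeing the structural constraint that each row contains exactly one of Alice's elements.

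Concretely, I would first have Alice sort her elements $s_1 < \cdots < s_N$ and compute $n_k = |\{i : \lceil s_i/B \rceil = k\}|$ for $k \in [N]$. Since $\sum_k n_k = N$, the vector $(n_1, \ldots, n_N)$ can be transmitted using a standard stars-and-bars encoding in $\log \binom{2N-1}{N} = O(N)$ bits; this is the only message sent.

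Once Bob has the histogram, both parties agree on the following mapping: Alice's $i$-th sorted element is placed at $(i, ((s_i - 1) \bmod B) + 1) \in [N] \times [B]$, which automatically satisfies the Blocked-LSD structural constraint of one element per row. From the $n_k$'s, Bob can reconstruct the block index $k_i$ of each of Alice's elements (the first $n_1$ indices sit in block $1$, the next $n_2$ in block $2$, and so on). For each $t \in T$, Bob then adds $(i, ((t-1) \bmod B) + 1)$ to $T'$ for every index $i$ with $k_i = \lceil t/B \rceil$. A collision $(i, p) \in S' \cap T'$ forces both $k_i = \lceil t/B \rceil$ and $(s_i - 1) \bmod B = (t-1) \bmod B$, which together imply $s_i = t$; conversely any original collision $s_i = t$ is preserved. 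Hence $S \cap T = \emptyset$ iff $S' \cap T' = \emptyset$.

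The only delicate point is verifying that the histogram really fits in $O(N)$ bits rather than the naive $O(N \log N)$, but this is immediate from the $\binom{2N-1}{N}$ bound. A secondary concern is that $|T'|$ may blow up---each $t$ is replicated up to $n_{\lceil t/B \rceil}$ times---but the total is bounded by $\sum_k n_k \cdot B = NB$, so the Blocked-LSD instance is well-formed. No other step requires communication, so the total cost is $O(N)$ as claimed.
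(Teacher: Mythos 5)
Your proposal is correct and is essentially the paper's own argument: Alice sends the block histogram, encoded via the $\binom{2N-1}{N}$ stars-and-bars bound in $O(N)$ bits, and Bob replicates each of his blocks according to its multiplicity while Alice distributes one element per new block; your version merely makes the index bookkeeping and the collision-preservation check more explicit.
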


\begin{proof}
In the general LSD, Alice's set $S$ might contain multiple elements in
each block. Alice begins by communicating to Bob the vector $(c_1,
\dots, c_N)$, where $c_i$ denotes the number of elements in block
$i$. The number of distinct possibilities for $(c_1, \dots, c_N)$ is
$\binom{2N-1}{N}$, so Alice needs to send $O(N)$ bits (in a possibly
non-uniform protocol).

Now Bob constructs a set $T'$ in which the $i$-th block of $T$ is
included $c_i$ times; a block with $c_i = 0$ is discarded.  Alice
considers a set $S'$ in which block $i$ gets expanded into $c_i$
blocks, with one element from the original block appearing in each of
the new blocks. We now have an instance of Blocked-LSD.
\end{proof}

\begin{lemma}
Blocked-LSD reduces to 2-Blocked-LSD by a deterministic protocol with
communication complexity $O(N)$.
\end{lemma}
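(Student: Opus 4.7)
The plan is to process the $N/B$ outer blocks of the Blocked-LSD instance one at a time. Inside a block $x$, the Blocked promise encodes Alice's input as a function $a_x\colon[B]\to[B]$ --- her unique element in row $y$ is $(x,y,a_x(y))$. The 2-Blocked promise, by contrast, requires exactly one element of each form $(x,y,\star)$ \emph{and} of each form $(x,\star,z)$; letting $R_x(y)$ and $C_x(z)$ be the corresponding determined coordinates, the consistency condition $C_x(R_x(y))=y$ and $R_x(C_x(z))=z$ forces $R_x$ and $C_x$ to be inverse bijections of $[B]$. So the reduction's real task is to turn the arbitrary $a_x$ into a permutation while using only $O(N)$ bits of communication.

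Step one is for Alice to publish, for each block $x$, the multiplicity vector $(m^{(x)}_0,\dots,m^{(x)}_{B-1})$ with $m^{(x)}_z=|a_x^{-1}(z)|$. Since the number of nonnegative vectors with $\sum_z m_z=B$ is $\binom{2B-1}{B-1}\le 4^B$, a prefix code sends one block in $O(B)$ bits and all $N/B$ blocks in $O(N)$ bits total, mirroring the bookkeeping of the previous reduction. Given the multiplicities, both parties refine the column axis of block $x$ by replacing column $z$ with $m^{(x)}_z$ new copies (discarding $z$ if $m^{(x)}_z=0$). Exactly $B$ refined columns remain, which we relabel as $[B]$. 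Alice defines her new row function $a'_x$ by sending row $y$ to the $k$-th copy of column $a_x(y)$, where $k$ is the rank of $y$ inside $a_x^{-1}(a_x(y))$. By construction $a'_x$ is a bijection, so its inverse gives a consistent column function $C'_x$, and the resulting $S'$ --- one row element per $(x,y)$ and one column element per $(x,z)$, with each column element coinciding with the row element at $y=C'_x(z)$ --- satisfies the 2-Blocked promise.

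For the other side, Bob, knowing only the multiplicities, constructs $T'$ by replacing every $(x,y,z)\in T$ with the $m^{(x)}_z$ refined copies $(x,y,(z,1)),\dots,(x,y,(z,m^{(x)}_z))$. Correctness then reduces to a one-line check: because $a'_x$ is a bijection, $S'$ as a set equals $\{(x,y,a'_x(y))\}$, so $S'\cap T'\neq\emptyset$ iff some $(x,y,a'_x(y))\in T'$, which by the duplication rule happens iff $(x,y,a_x(y))\in T$ --- exactly the original Blocked-LSD intersection condition. The step I expect to be the main subtlety is isolating the observation that the 2-Blocked promise forces permutation structure; once that is in hand, the multiplicity vector is the natural knob, the refinement of the column axis is dictated by the need to make $a_x$ bijective, and Bob's blind duplication of $T$ is forced by the requirement that he operate knowing only the $O(N)$ bits Alice sent.
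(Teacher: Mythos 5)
Your proposal is correct and follows essentially the same route as the paper: group $B$ consecutive blocks into a $B\times B$ matrix, have Alice send the $O(B)$-bit multiplicity vector per group, and duplicate the over-full axis (with Bob blindly replicating his elements) so that Alice's set becomes a permutation matrix, for $\frac{N}{B}\cdot O(B)=O(N)$ total communication. You simply spell out details the paper leaves implicit (the permutation structure forced by the 2-Blocked promise and the rank-based assignment of elements to copies).
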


\begin{proof}
Consider $B$ consecutive blocks of Blocked-LSD. Adjoining these blocks
together, we can view the universe as a $B \times B$ matrix. The
matrix has one entry in each column (one entry per block), but may
have multiple entries per row. The protocol from above can be applied
to create multiple copies of rows with more elements. After the
protocol is employed, there is one element in each row and each
column. Doing this for every group of $B$ blocks, the total
communication will be $\frac{N}{B} \cdot O(B) = O(N)$.
\end{proof}

Since the lower bound for LSD says that Alice must communicate
$\omega(N)$ bits, these reductions show that Blocked-LSD and
2-Blocked-LSD have the same complexity.

\subsection{Reductions}

Before proceeding, we must clarify the notion of reduction from a
communication problem to a data-structure problem. In such a
reduction, Bob constructs a database based on his set $T$, and Alice
constructs a set of $k$ queries. It is then shown that LSD can be
solved based on the answer to the $k$ queries on Bob's database.

When analyzing data structures of polynomial space or more, we will in
fact use just one query ($k=1$). If the data structure has size $S$
and query time $t$, this reduction in fact gives a communication
protocol for LSD, in which Alice communicates $t \lg S$ bits, and Bob
communicates $tw$ bits. This is done by simulating the query
algorithm: for each cell probe, Alice sends the address, and Bob sends
the content from his constructed database. At the end, the answer to
LSD is determined from the answer of the query.

If we are interested in lower bounds for space $n^{1+o(1)}$, note that
an upper bound of $\lg S$ for Alice's communication no longer
suffices, because $S = O(n^{1+\eps})$ and $S = O(n)$ yield the same
asymptotic bound. The work-around is to reduce to $k$ \emph{parallel}
queries, for large $k$. In each cell probe, the queries want to read
some $k$ cells from the memory of size $S$. Then, Alice can send $\lg
\binom{S}{k}$ bits, and Bob can reply with $k\cdot w$. Observe that
$\lg \binom{S}{k} \ll k\lg S$, if $k$ is large enough.

\section{Set Disjointness to Partial Match}  \label{sec:pm}
~

\begin{reduction}
Blocked-LSD reduces to one partial match query over $n=N\cdot B$
strings in dimension $d = O(N\lg B)$.
\end{reduction}

\begin{proof}
Consider a constant weight code $\phi$ mapping the universe $[B]$ to
$\{0,1\}^b$. If we use weight $b/2$, we have $\binom{b}{b/2} =
2^{\Omega(b)}$ codewords. Thus, we may set $b = O(\lg B)$.

If $S = \{ (1, s_1), \dots, (N, s_N) \}$, Alice constructs the query
string $\phi(s_1) \phi(s_2) \cdots$, i.e.~the concatenation of the
codewords of each $s_i$. We have dimension $d = N\cdot b = O(N \lg
B)$.

For each point $(x,y) \in T$, Bob places the string $0^{(x-1) b} \,
\phi(y)\, 0^{(N-x) b}$ in the database. Now, if $(i, s_i) \in T$, the
database contains a string with $\phi(s_i)$ at position $(i-1) b$, and
the rest zeros. This string is dominated by the query, which also has
$\phi(s_i)$ at that position. On the other hand, if a query dominates
some string in the database, then for some $(i, s_i) \in S$ and $(i,
y) \in T$, $\phi(s_i)$ dominates $\phi(y)$. But this means $s_i = y$
because in a constant weight code, no codeword can dominate another.
\end{proof}

From the lower bound on Blocked-LSD, we know that in a communication
protocol solving the problem, either Alice sends $\Omega(N \lg B)$
bits, or Bob sends $N \cdot B^{1-\delta} \ge n^{1-\delta}$
bits. Rewriting this bound in terms of $n$ and $d$, either Alice sends
$\Omega(d)$ bits, or Bob sends $n^{1-\delta}$ bits, for constant
$\delta > 0$.

This implies that a data structure with query time $t$ requires space
$2^{\Omega(d/t)}$, as long as the word size is $w \le n^{1-\delta} /
t$. It also implies that any decision tree of depth $n^{1-\delta}$
needs to have size $2^{\Omega(d/t)}$.

\section{Set Disjointness to Reachability Oracles} \label{sec:butterfly}

Since we want a lower bound for near-linear space, we must reduce LSD
to $k$ parallel queries on the reachability oracle. The entire action
is in what value of $k$ we can achieve. Note, for instance, that $k =
N$ is trivial, because Alice can pose a query for each item in her
set. However, a reduction with $k=N$ is also useless. Remember that
the communication complexity of Alice is $t \cdot \lg \binom{S}{k} \ge
t \lg \binom{NB}{N}$. But LSD is trivially solvable with communication
$\lg \binom{NB}{N}$, since Alice can communicate her entire set. Thus,
there is no contradiction with the lower bound.

To get a lower bound on $t$, $k$ must be made as small as possible
compared to $N$. Intuitively, a source--sink path in a butterfly of
depth $d$ traverses $d$ edges, so it should be possible to test $d$
elements by a single query. To do that, the edges must assemble in
contiguous source--sink paths, which turns out to be possible if we
carefully match the structure of the butterfly and the 2-Blocked-LSD
problem:

\begin{reduction}
Let $G$ be a degree-$B$ butterfly graph with $N$ non-sink vertices and
$N\cdot B$ edges, and let $d$ be its depth. 2-Blocked-LSD reduces to
$\frac{N}{d}$ parallel queries to a reachability oracle for a subgraph
of $G$.
\end{reduction}

\begin{proof}
Remember that in 2-Blocked-LSD, elements are triples $(x,y,z)$ from
the universe $[\frac{N}{B}] \times [B] \times [B]$. We define below a
bijection between $[\frac{N}{B}] \times [B]$ and the non-sink vertices
of $G$. Since $(x,y)$ is mapped to a non-sink vertex, it is natural to
associate $(x,y,z)$ to an edge, specifically edge number $z$ going out
of vertex $(x,y)$.

Bob constructs a reachability oracle for the graph $G$ excluding the
edges in his set $T$. Then, Alice must find out whether any edge from
her set $S$ has been deleted. By mapping the universe $[\frac{N}{B}]
\times [B]$ to the nodes carefully, we will ensure that Alice's edges
on each level form a perfect matching. Then, her set of $N$ edges form
$\frac{N}{d}$ disjoint paths from sources to sinks. Using this
property, Alice can just issue $\frac{N}{d}$ queries for these
paths. If any of the source--sink pairs is unreachable, some edge in
$S$ has been deleted.

To ensure Alice's edges form perfect matchings at each level, we first
decompose the non-sink vertices of $G$ into $\frac{N}{B}$ microsets of
$B$ elements each. Each microset is associated to some level $i$, and
contains nodes of the form $(\cdots, v_{i-1}, \star, v_{i+1}, \cdot)$
on level $i$. A value $(x,y)$ is mapped to node number $y$ in a
microset identified by $x$ (through some arbitrary bijection between
$[\frac{N}{B}]$ and microsets).

Let $(x, 1, z_1), \dots, (x, B, z_B)$ be the values in $S$ that give
edges going out of microset $x$. If the nodes of the microset are the
vectors $(\cdots, v_{i-1}, \star, v_{i+1}, \cdot)$, the nodes to which
the edges of $S$ go are the vectors $(\cdots, v_{i-1}, z_j, v_{i+1},
\cdot)$ on the next level, where $j \in [B]$. Observe that edges from
different microsets cannot go to the same vertex. Also, edges from the
same microset go to distinct vertices by the 2-Blocked property: for
any fixed $x$, the $z_j$'s are distinct. Since all edges on a level
point to distinct vertices, they form a perfect matching.
\end{proof}

Let us now compute the lower bounds implied by the reduction. We
obtain a protocol for 2-Blocked-LSD in which Alice communicates $t \lg
\binom{S}{k} = O(tk \lg\frac{S}{k}) = O(N \cdot \frac{t}{d} \lg
\frac{Sd}{N})$ bits, and Bob communicates $k\cdot t\cdot w = O(N \cdot
\frac{t}{d} \cdot w)$ bits. On the other hand, the lower bound for
2-Blocked-LSD says that Alice needs to communicate $\Omega(N\lg B)$
bits, or Bob needs to communicate $N B^{1-\delta}$, for any constant
$\delta > 0$. It suffices to use, for instance, $\delta =
\frac{1}{2}$.

Comparing the lower bounds with the reduction upper bound, we conclude
that either $\frac{t}{d}\lg \frac{Sd}{N} = \Omega(\lg B)$, or
$\frac{t}{d} w = \Omega(\sqrt{B})$. Set the degree of the butterfly to
satisfy $B \ge w^2$ and $\lg B \ge \lg \frac{Sd}{N}$. Then,
$\frac{t}{d} = \Omega(1)$, i.e.~$t = \Omega(d)$. This is intuitive: it
shows that the query needs to be as slow as the depth, essentially
traversing a source to sink path.

Finally, note that the depth is $d = \Theta(\log_B N)$. Since $\lg B
~\ge~ \max \big\{ 2\lg w, \lg \frac{Sd}{N} \big\} ~=~ \Omega\big( \lg
w + \lg \frac{Sd}{N} \big) ~=~ \Omega\big( \lg \frac{Sdw}{N}
\big)$. Note that certainly $d < w$, so $\lg B = \Omega( \lg
\frac{Sw}{N})$. We obtain $t = \Omega(d) = \Omega(\lg N / \lg
\frac{Sw}{N})$.

\section{Proof of the LSD Lower Bounds} \label{sec:lsd-proof}

Our goal here is to prove Theorem~\ref{thm:rand-lsd}, our optimal
lower bound for LSD.

\subsection{The Hard Instances}

We imagine the universe to be partitioned into $N$ \emph{blocks}, each
containing $B$ elements. Alice's set $S$ will contain exactly one
value from each block. Bob's set $T$ will contain $\frac{B}{2}$ values
from each block; more precisely, it will contain one value from each
pair $\{ (j, 2k); (j, 2k+1) \}$.

Let $\calS$ and $\calT$ be the possible choices for $S$ and $T$
according to these rules. Note that $|\calS| = B^N$ and $|\calT| =
2^{NB/2}$. We denote by $S_i$ Alice's set restricted to block $i$, and
by $T_i$ Bob's set restricted to block $i$. Let $\calS_i$ and
$\calT_i$ be the possible choices for $S_i$ and $T_i$. We have
$|\calS_i| = B$ and $|\calT_i| = 2^{B/2}$.

We now define $\Dyes$ to be the uniform distribution on pairs $(S,
T) \in \calS \times \calT$ with $S \cap T = \emptyset$. In each block
$i$, there are two natural processes to generate $(S_i, T_i) \in
\calS_i \times \calT_i$ subject to $S_i \cap T_i = \emptyset$:

\begin{enumerate}
\item Pick $T_i \in \calT_i$ uniformly at random, i.e.~independently
  pick one element from each pair $\{(i,2k), (i,2k+1)\}$. Then, pick
  the singleton $S_i$ uniformly at random from the complement of
  $T_i$. Note that $\HH(S_i \mid T_i) = \log_2 (B/2)$.

\item Pick $S_i$ to be a uniformly random element from block
  $i$. Then, pick $T_i$ such that it doesn't intersect
  $S_i$. Specifically, if $S_i \cap \{2k, 2k+1\} = \emptyset$, $T_i$
  contains a random element among $2k$ and $2k+1$. Otherwise, $T_i$
  gets the element not in $S_i$. Note that $\HH(T_i \mid S_i) =
  \frac{B}{2} - 1$.
\end{enumerate}

To generate the distribution $\Dyes$, we will employ the following
process. First, pick $q \in \{0,1\}^N$ uniformly at random. For each
$q_i = 0$, apply process 1.~from above in block $i$; for each $q_i =
1$, apply process 2.~in block $i$. Now let $Q$ be a random variable
entailing: the vector $q$; the value $S_i$ for every $i$ with $q_i =
0$; and the value $T_i$ for every $i$ with $q_i = 1$. Intuitively, $Q$
describes the ``first half'' of each random process.

We now define distributions $\calD_k$ as follows. In block $k$ (called
\emph{the designated block}), choose $(S_k,T_k) \in \calS_k \times
\calT_k$ uniformly. Notice that $\Pr[S_k \cap T_k \ne \emptyset] =
\frac{1}{2}$. In all other blocks $i \ne k$, choose $(S_i, T_i) \in
\calS_i \times \calT_i$ as in the distribution $\Dyes$ above. As
above, we have a vector $Q_{-k}$, containing: $q_i$ for $i\ne k$; all
$S_i$ such that $q_i = 0$; and all $T_i$ such that $q_i = 1$.

We are going to prove that:

\begin{theorem}    \label{thm:lsd-distr}
Fix $\delta > 0$. If a protocol for LSD has error less than
$\frac{1}{9999}$ on distribution $\frac{1}{N} \sum_{i=1}^N \calD_i$,
then either Alice sends at least $\delta N\lg B$ bits, or Bob sends at
least $N\cdot B^{1-O(\delta)}$ bits.
\end{theorem}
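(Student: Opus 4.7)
The plan is to adapt the information-theoretic direct-sum framework of Bar-Yossef et al.\ to the asymmetric lopsided set-disjointness setting. Define the Alice-side and Bob-side information costs
$$ a \;=\; \I(S; \Pi \mid T, Q), \qquad b \;=\; \I(T; \Pi \mid S, Q), $$
where $\Pi$ is the protocol transcript. The construction of $Q$ is tailored so that these quantities individually lower-bound Alice's and Bob's total communication ($|\Pi_A| \gtrsim a$, $|\Pi_B| \gtrsim b$): the vector $q$ is independent of $(S,T)$ and can be placed in the public coins, while the $S$-dependent and $T$-dependent halves of $Q$ are already known to Alice and Bob respectively, so the usual inequality $\I(S;\Pi \mid T) \le |\Pi_A|$ carries over after the conditioning. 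The second essential property of $Q$, which motivates the two-process definition, is that conditioned on $Q$ the ``free'' halves of different blocks ($T_i$ when $q_i=0$, $S_i$ when $q_i=1$) are independent across $i$. The chain rule then gives $a = \sum_i \I(S_i;\Pi\mid T,Q)$ and $b = \sum_i \I(T_i;\Pi\mid S,Q)$, so the average block contributes $\alpha := a/N$ and $\beta := b/N$.

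Next I would reduce to a single designated block. The mixture $\tfrac{1}{N}\sum_k \calD_k$ is sampled by picking $k$ uniformly, generating the non-designated blocks according to $\Dyes$, and independently drawing $(S_k,T_k)$ uniformly from $\calS_k\times\calT_k$ so that $\Pr[S_k\cap T_k\ne\emptyset]=\tfrac12$. Conditioning a bounded-error mixture protocol on $(k, Q_{-k})$ and on the free halves of the non-designated blocks (all absorbed into public randomness) yields a bounded-error single-block protocol $\Pi_k$: Alice holds a uniform $s\in[B]$, Bob holds a uniform $T\in\calT_k$, and they must decide whether $s\in T$. By the block decomposition, the expected Alice-side and Bob-side information costs of $\Pi_k$ over a random $k$ are at most $\alpha$ and $\beta$.

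The technical heart of the proof, and the main obstacle, is the single-block trade-off: any $\Pi_k$ with constant error must satisfy $\alpha \geq \delta \lg B$ or $\beta \geq B^{1-O(\delta)}$, from which multiplying by $N$ gives Theorem~\ref{thm:lsd-distr}. I would establish it by adapting the Hellinger / statistical-distance machinery underlying the Razborov and Bar-Yossef et al.\ proofs for symmetric disjointness. Correctness forces the squared Hellinger distance between the transcript distributions on any two inputs with differing answer to be $\Omega(1)$, and a cut-and-paste argument controls this squared distance in terms of $\alpha$ and $\beta$. The genuinely asymmetric ingredient is to exploit the lopsidedness: when $\alpha \ll \lg B$, the transcript effectively confines Alice's element to a candidate set of size $2^{O(\alpha)}$, so that to be correct on a random $s$ in this set Bob must distinguish $\Omega(B/2^{O(\alpha)})$ pairs, forcing $\beta = \Omega(B^{1-O(\alpha/\lg B)})$. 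This is precisely the step where the simpler product-distribution argument of \cite{andoni06eps2n} breaks down in the small-universe regime $B \ll N$, and where one must invoke the full Razborov-style dependent-coordinate analysis.
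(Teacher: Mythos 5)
Your first two stages --- the conditional-information direct sum over the $N$ blocks using the auxiliary variable $Q$, and the restriction to a single designated block where Alice holds a uniform element of $[B]$ and Bob a uniform half-set --- match the paper's architecture. (Two repairable inaccuracies: the paper's chain rule keeps the prefix conditioning $S_{<i}$, $T_{<i}$ rather than conditioning on all of $T$, and consequently the non-designated blocks $i>k$ must be simulated with \emph{private} coins, not ``all absorbed into public randomness'' as you write; making everything public would require an information bound conditioned on data the chain rule does not give you.)

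The genuine gap is the third stage, which you yourself identify as the technical heart: you do not prove the single-block trade-off, you only name a toolbox, and the toolbox is the wrong one. The Hellinger/cut-and-paste machinery of Bar-Yossef et al.\ is tailored to a per-coordinate primitive that is AND on $O(1)$ bits, where an $\Omega(1)$ information bound per coordinate suffices; here the per-block primitive is indexing into $[B]$ and the required conclusion is a sharp trade-off ($\alpha\ge\delta\lg B$ or $\beta\ge B^{1-O(\delta)}$), which cut-and-paste does not obviously deliver. The paper instead converts the information bounds into conditional-entropy bounds given the transcript, extracts by Markov arguments a single ``yes'' rectangle $\sigma\times\tau$ with small error and near-maximal entropy for both $S_k$ and $T_k$, and proves a combinatorial statement (Lemma~\ref{lem:rectangle}): in any such rectangle, a Chernoff bound over Bob's $B/2$ independent pair-choices forces $\Pr[S_k\cap T_k\ne\emptyset]\ge\frac{1}{40}$, contradicting correctness. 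Moreover, your key asymmetric step --- ``the transcript effectively confines Alice's element to a candidate set of size $2^{O(\alpha)}$'' --- is false as stated: $\I(S_k;\Pi)\le\alpha$ bounds only the average entropy deficit, so the posterior on $S_k$ may still concentrate substantial mass on a few heavy elements while retaining entropy $\lg B-O(\alpha)$. The paper must work around exactly this (the set $\calS^\star$ of elements with $\mu_\sigma(S_k)\le B^{-(1-7\gamma)}$ and the claim that $\mu_\sigma(\calS^\star)\ge\frac{1}{5}$), and your subsequent inference that ``Bob must distinguish $\Omega(B/2^{O(\alpha)})$ pairs'' is asserted with no mechanism. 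As written, the proposal builds the reduction scaffolding but does not establish the lower bound.
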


The distribution $\Dyes$ will be used to measure various entropies in
the proof, which is convenient because the blocks are
independent. However, the hard distribution on which we measure error
is the mixture of $\calD_i$'s. (Since $\Dyes$ only has yes instances,
measuring error on it would be meaningless.) While it may seem
counterintutive that we argue about entropies on one distribution and
error on another, remember that $\Dyes$ and $D_i$ are not too
different: $S$ and $T$ are disjoint with probability $\frac{1}{2}$
when chosen by $\calD_i$.

\subsection{A Direct Sum Argument}

We now wish to use a direct-sum argument to obtain a low-communication
protocol for a single subproblem on $\calS_i \times
\calT_i$. Intuitively, if the LSD problem is solved by a protocol in
which Alice and Bob communicate $a$, respectively $b$ bits, we might
hope to obtain a protocol for some subproblem $i$ in which Alice
communicates $O(\frac{a}{N})$ bits and Bob communicates
$O(\frac{b}{N})$ bits.

Let $\pi$ be the transcript of the communication protocol. If Alice
sends $a$ bits and Bob $b$ bits, we claim that $\I_{\Dyes} (S : \pi
\mid Q) \le a$ and $\I_{\Dyes} (T : \pi \mid Q) \le b$. Indeed, once we
condition on $Q$, $S$ and $T$ are independent random variables: in each
block, either $S$ is fixed and $T$ is random, or vice versa. The
independence implies that all information about $S$ is given by
Alice's messages, and all information about $T$ by Bob's messages.

Define $S_{<i} = (S_1, \dots, S_{i-1})$.  We can decompose the mutual
information as follows: $\I_{\Dyes} (S: \pi \mid Q) = \sum_{i=1}^N
\I_{\Dyes} (S_i : \pi \mid Q, S_{<i})$. The analogous relation holds
for $T$. By averaging, it follows that for at least half of the values
of $i$, we simultaneously have:
\begin{equation} \label{eq:inf-both}
\I_{\Dyes} (S_i :\pi \mid Q, S_{<i} ) \le \frac{4a}{N} 
\qquad \textnormal{and} \qquad 
\I_{\Dyes} (T_i : \pi \mid Q, T_{<i} ) \le \frac{4b}{N}.
\end{equation}
Remember that the average error on $\frac{1}{N} \sum_i \calD_i$ is
$\frac{1}{9999}$. Then, there exists $k$ among the half satisfying
\eqref{eq:inf-both}, such that the error on $\calD_k$ is at most
$\frac{2}{9999}$. For the remainder of the proof, fix this $k$.

We can now reinterpret the original protocol for LSD as a new protocol
for the disjointness problem in block $k$. This protocol has the
following features:

\begin{description}
\item[Inputs:] Alice and Bob receive $S_k \in \calS_k$, respectively
  $T_k \in \calT_k$.

\item[Public coins:] The protocol employs public coins to select
  $Q_{-k}$. For every $i < k$ with $q_i = 0$, $S_i$ is chosen publicly
  to be disjoint from $T_i$ (which is part of $Q_{-k}$). For every $i
  < k$ with $q_i = 1$, $T_i$ is chosen publicly to be disjoint from
  $S_i$. 

\item[Private coins:] Alice uses private coins to select $S_i$ for all
  $i > k$ with $q_i = 0$. Bob uses private coins to select $T_i$ for
  all $i > k$ with $q_i = 1$. As above, $S_i$ is chosen to be disjoint
  from $T_i$ (which is public knowledge, as part of $Q_{-k}$), and
  analogously for $T_i$.

\item[Error:] When $S_k$ and $T_k$ are chosen independently from
  $\calS_k \times \calT_k$, the protocol computes the disjointness of
  $S_k$ and $T_k$ with error at most $\frac{2}{9999}$. Indeed, the
  independent choice of $S_k$ and $T_k$, and the public and private
  coins realize exactly the distribution $\calD_k$.

\item[Message sizes:] Unfortunately, we cannot conclude that the
  protocol has small communication complexity in the regular sense,
  i.e.~that the messages are small. We will only claim that the
  messages have small \emph{information complexity}, namely that they
  satisfy \eqref{eq:inf-both}.
\end{description}

Observe that the disjointness problem in one block is actually the
indexing problem: Alice receives a single value (as the set $S_k$) and
she wants to determined whether that value is in Bob's set. Since $|S_k|
= 1$, we note that $S_k \cap T_k = \emptyset$ iff $S_k \not\subset
T_k$.

\subsection{Understanding Information Complexity}

In normal communication lower bounds, one shows that if the protocol
communicates too few bits, it must make a lot of errors. In our case,
however, we must show that a protocol with small information
complexity (but potentially large messages) must still make a lot of
error.

Let us see what the information complexity of \eqref{eq:inf-both}
implies. We have:
\begin{eqnarray*}
\I_{\Dyes} (S_k : \pi \mid Q, S_{<i} ) &=&
    \tfrac{1}{2} \cdot \I_{\Dyes} (S_k : \pi 
           \mid q_k = 1, T_k, Q_{-k}, S_{<i} ) \\
&+& \tfrac{1}{2} \cdot \I_{\Dyes} (S_k : \pi 
           \mid q_k = 0, S_k, Q_{-k}, S_{<i} )
\end{eqnarray*}
The second term is zero, since $\HH(S_k \mid S_k) = 0$.  Thus, the old
bound $\I_{\Dyes} (S_k : \pi \mid Q, S_{<i}) \le \frac{4a}{N}$ can be
rewritten as $\I_{\Dyes} (S_k : \pi \mid q_k = 1, T_k, Q_{-k}, S_{<i})
\le \frac{8a}{N}$. We will now aim to simplify the left hand side of
this expression.

First observe that we can eliminate $q_k=1$ from the conditioning:
$\I_{\Dyes} (S_k : \pi \mid q_k = 1, T_k, Q_{-k}, S_{<i}) = \I_{\Dyes}
(S_k : \pi \mid T_k, Q_{-k}, S_{<i})$. Indeed, $\pi$ is a function of
$S$ and $T$ alone. In other words, it is a function of the public
coins $Q_{-k}$, the private coins, $S_k$, and $T_k$. But the
distribution of the inputs is the same for $q_k=1$ and $q_k=0$. In
particular, the two processes for generating $S_k$ and $T_k$ (one
selected by $q_k = 0$, the other by $q_k = 1$) yield the the same
distribution.

Now remember that $\Dyes$ is simply $\calD_k$ conditioned on $S_k \cap
T_k = \emptyset$. Thus, we can rewrite the information under the
uniform distribution for $S_k$ and $T_k$: $\I (S_k : \pi \mid Q_{-k},
T_k, S_k \not\subset T_k, S_{<k}) \le \frac{8a}{N}$. (To alleviate notation,
we drop subscripts for $\I$ and $\HH$ whenever uniform distributions
are used.) We are now measuring information under the same
distribution used to measure the error.

Analogously, it follows that $\I(T_k : \pi \mid Q_{-k}, S_k, S_k
\not\subset T_k, T_{<k}) \le \frac{8b}{N}$. We can now apply three
Markov bounds, and fix the public coins ($Q_{-k}$, $S_{<k}$, and
$T_{<k}$) such that all of the following hold:
\begin{enumerate}
\item the error of the protocol is at most $\frac{8}{9999}$;

\item $\I(S_k : \pi \mid T_k, S_k \not\subset T_k) \le \frac{32 a}{N}$;

\item $\I(T_k : \pi \mid S_k, S_k \not\subset T_k) \le \frac{32 b}{N}$.
\end{enumerate}

To express the guarantee of 1., define a random variable $\err$ which
is one if the protocol makes an error, and zero otherwise. Note that
$\err$ is a function $\err : S_k \times \coins_A \times T_k \times
\coins_B \to \{0,1\}$, where we defined $\coins_A$ as the set of
private coin outcomes for Alice and $\coins_B$ as the private coin
outcomes for Bob. By 1., we have $\E[\err] \le \frac{8}{9999}$.

We can rewrite 2.~by expanding the definition of information:
\begin{eqnarray*}
\I(S_k : \pi \mid T_k, S_k \not\subset T_k) &=&
  \HH(S_k \mid T_k, S_k \not\subset T_k) 
  - \HH(S_k \mid T_k, \pi, S_k \not\subset T_k) \\
&=& \log_2 \tfrac{B}{2} - \HH(S_k \mid T_k, \pi, S_k \not\subset T_k) 
\end{eqnarray*}
Applying a similar expansion to $T_k$, we conclude that:
\begin{eqnarray}
\log_2 \tfrac{B}{2} - \HH(S_k \mid T_k, \pi, S_k \not\subset T_k) 
    &\le& \tfrac{32 a}{N} \label{eq:H-Sk}
\\
(\tfrac{B}{2} - 1) - \HH(T_k \mid S_k, \pi, S_k \not\subset T_k)  
    &\le& \tfrac{32b}{N}  \label{eq:H-Tk}
\end{eqnarray}

Consider some transcript $\widetilde{\pi}$ of the communication
protocol. A standard observation in communication complexity is that
the set of inputs for which $\pi = \widetilde{\pi}$ is a combinatorial
rectangle in the truth table of the protocol: one side is a subset of
$S_k \times \coins_A$, and the other a subset of $T_k \times
\coins_B$. In any rectangle, the output of the protocol is fixed.

Observe that the probability that the output of the protocol is ``no''
is at most $\frac{1}{2}$ (the probability that $S_k$ and $T_k$
intersect) plus $\frac{8}{9999}$ (the probability that the protocol
makes an error). Discard all rectangles on which the output is ``no.''
Further discard all rectangles that fail to satisfy any of the
following:
\begin{eqnarray*}
\E[\err \mid \pi = \widetilde{\pi} ] &\le& \tfrac{64}{9999} 
  \\ 
\log_2 \tfrac{B}{2} - \HH(S_k \mid T_k, S_k \not\subset T_k, 
    \pi = \widetilde{\pi}) &\le& \tfrac{256 a}{N}
  \\
(\tfrac{B}{2} - 1) - \HH(T_k \mid S_k, S_k \not\subset T_k, 
    \pi = \widetilde{\pi}) &\le& \tfrac{256 b}{N}
\end{eqnarray*}
By the Markov bound, the mass of rectangles failing each one of these
tests is at most $\frac{1}{8}$. In total, at most $\frac{1}{2} +
\frac{8}{9999} + 3\cdot \frac{1}{8} < 1$ of the mass got
discarded. Thus, there exists a rectangle $\widetilde{\pi}$ with
answer ``yes'' that satisfies all three constraints. 

Let $\sigma$ be the distribution of $S_k$ conditioned on $\pi =
\widetilde{\pi}$, and $\tau$ be the distribution of $T_k$
conditioned on $\pi = \widetilde{\pi}$.  With this notation, we have:
\begin{enumerate}
\item $\E_{\sigma, \tau}[\err ] \le \tfrac{64}{9999}$, thus
  $\Pr_{\sigma, \tau} [ S_k \cap T_k \ne \emptyset] \le
  \tfrac{64}{9999}$.

\item $\HH_{\sigma, \tau}(S_k \mid T_k, S_k \not\subset T_k) 
    \ge \log_2 \frac{B}{2} - \tfrac{256 a}{N}$.

\item $\HH_{\sigma,\tau}(S_k \mid T_k, S_k \not\subset T_k)
    \ge (\frac{B}{2} - 1) - \tfrac{256 b}{N}$.
\end{enumerate}

In the next section, we shall prove that in every ``large enough''
rectangle (in the sense of entropy) the probability that $S_k$ and
$T_k$ intersect is noticeable:
\begin{lemma}   \label{lem:rectangle}
Let $\gamma > 0$. Consider probability distributions $\sigma$ on
support $\calS_k$, and $\tau$ on support $\calT_k$. The following
cannot be simultaneously true:
\begin{align}
\Pr_{\sigma\times \tau} [S_k \cap T_k \ne \emptyset] 
   &~\le~ \tfrac{1}{42}   \label{eq:enum-Pr} \\
\HH_{\sigma\times \tau} (S_k \mid T_k, S_k \not\subset T_k) 
   &~\ge~ (1 - \gamma) \log_2 B  \label{eq:enum-HS} \\
\HH_{\sigma\times \tau} (T_k \mid S_k, S_k \not\subset T_k) 
   &~\ge~ \tfrac{B}{2} - \tfrac{1}{840} \cdot B^{1-7\gamma} 
       \label{eq:enum-HT}
\end{align}
\end{lemma}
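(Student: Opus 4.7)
The plan is to show that the two entropy conditions \eqref{eq:enum-HS} and \eqref{eq:enum-HT} force $\sigma$ and $\tau$ to be uniform-like enough that $\bar p := \Pr_{\sigma\times\tau}[S_k \subset T_k]$ must be near $\tfrac12$, contradicting \eqref{eq:enum-Pr}.

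First, I would introduce the conditional distribution $\mu = (\sigma\times\tau)\mid_A$ on $A := \{(s,t) : s\not\in t\}$, so that $\mu(s,t)=\sigma(s)\tau(t)/(1-\bar p)$. Then \eqref{eq:enum-HS} and \eqref{eq:enum-HT} rewrite as $\HH_\mu(S\mid T)\ge(1-\gamma)\log_2 B$ and $\HH_\mu(T\mid S)\ge B/2-\eps$, where $\eps = B^{1-7\gamma}/840$. In particular $\I_\mu(S;T) = \HH_\mu(T) - \HH_\mu(T\mid S) \le \eps$, and symmetrically $\I_\mu(S;T)\le \gamma\log_2 B$, so under $\mu$ the two coordinates are nearly independent.

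Second, I would control the Bernoulli marginals of $\mu_T$ on each pair $j$. Since $\mu_T$ has entropy at least $B/2-\eps$ on support of size $|\calT_k|=2^{B/2}$, the total-correlation decomposition $\mathrm{KL}(\mu_T\,\|\,U_{\calT_k}) \ge \sum_j \mathrm{KL}(\mu_T^{(j)}\,\|\,\mathrm{Bern}(\tfrac12))$ gives $\sum_j \mathrm{KL}(\mu_T^{(j)}\,\|\,\mathrm{Bern}(\tfrac12)) \le \eps$. The quadratic bound $1-h_2(q)\ge c(q-\tfrac12)^2$ then yields $\sum_j (\tilde q_j-\tfrac12)^2 = O(\eps)$, where $\tilde q_j := \Pr_{\mu_T}[T^{(j)}=2j]$.

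Third, since $\mu$ is supported on $A$, $\Pr_\mu[S\in T]=0$, whereas a direct expansion over pairs gives
\[
  \E_{\mu_S\times\mu_T}[\mathbf{1}[S\in T]] \;=\; \tfrac12 + \sum_j a_j^\mu(\tilde q_j-\tfrac12),
\]
with $a_j^\mu := \mu_S(2j)-\mu_S(2j+1)$. The small mutual information $\I_\mu(S;T)$ makes $\mu$ and $\mu_S\times\mu_T$ close in $\mathrm{KL}$, forcing $\E_{\mu_S\times\mu_T}[\mathbf{1}[S\in T]]$ to be small, hence $\sum_j a_j^\mu(\tilde q_j-\tfrac12) \approx -\tfrac12$. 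Cauchy--Schwarz and the $\ell_2$-bound $\sum(\tilde q_j-\tfrac12)^2 = O(\eps)$ then force $\sum_j (a_j^\mu)^2 = \Omega(1/\eps)$. But $\sum_j (a_j^\mu)^2 \le \|\mu_S\|_2^2$, which the entropy bound on $\mu_S$ (combined with the assumed $\bar p\le 1/42$, which ensures $\mu_S$ is close to $\sigma$ in total variation) is supposed to constrain. Finally, $\sigma\times\tau$ and $\mu_S\times\mu_T$ differ by at most $4\bar p \le \tfrac{4}{42}$ in total variation, so $\bar p = \E_{\sigma\times\tau}[\mathbf{1}[S\in T]]$ is pushed above $1/42$.

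The main obstacle is quantitative tightness: because $\eps$ can be polynomial in $B$, generic Pinsker bounds on total variation are useless, so one must work marginal-by-marginal via the total-correlation decomposition, extracting per-pair $\ell_2$-closeness to uniform and combining this with $\|\mu_S\|_2$-control that comes from the high entropy of $\sigma$ together with its structure as a distribution over pairs. Pinpointing the exact threshold of $\gamma$ at which the $\|\mu_S\|_2$-bound and the $\sum(\tilde q_j-\tfrac12)^2$-bound fit together is the delicate step; the factor $7$ in the exponent $1-7\gamma$ presumably encodes the constant-factor slack used by these conversions and by the passage from $\mu_S\times\mu_T$ back to $\sigma\times\tau$.
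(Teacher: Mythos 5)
Your high-level plan---use the entropy conditions to force the pair-marginals of $T_k$ close to $\mathrm{Bern}(\tfrac12)$ in $\ell_2$, expand $\Pr[S_k\subset T_k]$ as $\tfrac12+\sum_j a_j(\tilde q_j-\tfrac12)$, and conclude by Cauchy--Schwarz---is a genuinely different route from the paper's, and its first half (subadditivity of entropy giving $\sum_j(\tilde q_j-\tfrac12)^2=O(B^{1-7\gamma})$) is sound. But there are two gaps, and the second is fatal as stated. First, the claim that ``small mutual information makes $\mu$ and $\mu_S\times\mu_T$ close in KL'' has no quantitative content here: the available bounds are $\I_\mu(S;T)\le\gamma\log_2 B$ (from \eqref{eq:enum-HS}) and $\I_\mu(S;T)\le B^{1-7\gamma}/840$ (from \eqref{eq:enum-HT}), both $\omega(1)$, so no closeness in distribution---and hence no bound on $\E_{\mu_S\times\mu_T}[\mathbf{1}[S\in T]]$---follows. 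You also cannot dodge this by working with the genuine product $\sigma\times\tau$, because transferring the per-pair $\ell_2$ bound from $\mu_T$ back to $\tau$ costs an additive $1/42$ per coordinate, i.e.\ $\Theta(B)$ in the sum of squares. Second, and more fundamentally, the Shannon entropy bound \eqref{eq:enum-HS} does \emph{not} control $\sum_j(a_j^\mu)^2$ or $\|\mu_S\|_2^2$: a distribution placing mass $\gamma$ on one element and spreading the rest uniformly has entropy $\ge(1-\gamma)\log_2 B$ yet $\sum_j a_j^2=\Omega(\gamma^2)$, a constant, whereas your Cauchy--Schwarz step needs $\sum_j a_j^2=O(B^{7\gamma-1})=o(1)$. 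Heavy atoms of $\sigma$ are exactly the obstruction and must be split off.

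That split is the heart of the paper's proof, which is one-sided rather than two-sided: it defines $\calS^\star=\{x:\mu_\sigma(x)\le B^{7\gamma-1}\}$, shows $\mu_\sigma(\calS^\star)\ge\tfrac15$ by an entropy/counting argument (after fixing a good column $\widehat T_k$ via Markov), and lower-bounds $\Pr[S_k\cap T_k\ne\emptyset]=\E_\tau\big[\sum_{x\in T_k}\mu_\sigma(x)\big]$ by restricting the inner sum to $\calS^\star$. For uniform $T_k$ that restricted sum is a sum of $B/2$ independent terms, each at most $B^{7\gamma-1}$, with mean $\ge\tfrac1{10}$, so by Chernoff it drops below $\tfrac1{20}$ on only $2^{B/2}e^{-B^{1-7\gamma}/80}$ choices of $T_k$; condition \eqref{eq:enum-HT} (via the same fix-a-good-row trick) then forces $\tau$ to put mass at most $\tfrac12$ on that exponentially small bad set, giving $\Pr[S_k\cap T_k\ne\emptyset]\ge\tfrac1{40}>\tfrac1{42}$. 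To salvage your route you would need (i) the analogue of $\mu_\sigma(\calS^\star)\ge\tfrac15$, so that heavy pairs contribute at most $\tfrac12\cdot\tfrac45$ to $|\sum_j a_j\delta_j|$ while Cauchy--Schwarz on the light pairs contributes only $O(1/\sqrt{840})$, and (ii) a correct replacement for the KL-closeness step---at which point you have essentially reconstructed the paper's argument.
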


Since $\frac{64}{9999} \le \frac{1}{42}$, one of the following must
hold:
\begin{align*}
\log_2 \tfrac{B}{2} - \tfrac{256 a}{N} &~\le~ (1 - \gamma) \log_2 B
&~\Rightarrow~  a &~\ge~ \tfrac{\gamma}{257} \cdot N \log_2 B
\\
(\tfrac{B}{2} - 1) - \tfrac{256 b}{N} &~\le~  
    \tfrac{B}{2} - \tfrac{1}{840} \cdot B^{1-7\gamma} &~\Rightarrow~   
b &~\ge~ \tfrac{1}{216000} \cdot N \cdot B^{1-7\gamma}
\end{align*}
For $N$ and $B$ greater than a constant, it follows that either Alice
sends at least $\delta N \lg B$ bits, or Bob must send at least
$\frac{1}{216 000} N \cdot B^{1 - 1799\cdot \delta}$ bits.

\subsection{Analyzing a Rectangle}

The goal of this section is to show Lemma~\ref{lem:rectangle}. Let
$\mu_\sigma$ and $\mu_\tau$ be the probability density functions of
$\sigma$ and $\tau$.  We define $\calS^\star$ as the set of values of
$S_k$ that do not have unusually high probability according to
$\sigma$: $\calS^\star = \big\{ S_k \mid \mu_\sigma(S_k) \le 1 \big/
B^{1-7\gamma} \big\}$.  We first show that significant mass is left
in $\calS^\star$:

\begin{claim}
$\mu_\sigma(\calS^\star) \ge \frac{1}{5}$.
\end{claim}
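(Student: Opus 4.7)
The plan is to prove the claim by contradiction: suppose $p := \mu_\sigma(\calS^\star) < 1/5$ and derive a violation of the entropy bound \eqref{eq:enum-HS}. The first observation is that the heavy set $H := \calS_k \setminus \calS^\star$ has cardinality $|H| < B^{1-7\gamma}$, since every $s \in H$ carries mass exceeding $B^{-(1-7\gamma)}$ and the total mass is at most $1$.

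Next, I would transfer the marginal assumption $p < 1/5$ into a bound on the conditional probability $\bar p := \Pr_{\sigma\times\tau}[S_k \in \calS^\star \mid S_k \not\subset T_k]$. Using $\Pr_{\sigma\times\tau}[S_k \not\subset T_k] \ge 41/42$ from \eqref{eq:enum-Pr}, one gets $\bar p \le \tfrac{42}{41}\, p < \tfrac{42}{205}$, so that $1 - \bar p > 3/4$.

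The heart of the argument is a chain-rule split of the conditional entropy of $S_k$ according to the indicator $Z := \mathbf{1}[S_k \in \calS^\star]$. The three standard bounds $\HH(Z) \le 1$, $|\calS^\star \setminus t| \le B$, and $|H \setminus t| < B^{1-7\gamma}$ combine to give
\[
  \HH_{\sigma\times\tau}(S_k \mid T_k, S_k \not\subset T_k) \;\le\; 1 + \bar p\, \log_2 B + (1 - \bar p)(1 - 7\gamma)\log_2 B.
\]
Comparing this to the hypothesis \eqref{eq:enum-HS} and rearranging yields $\gamma\bigl[7(1 - \bar p) - 1\bigr]\log_2 B \le 1$; since $7(1 - \bar p) - 1 > 4$, this forces $\gamma \log_2 B < 1/4$. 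On the other hand, \eqref{eq:enum-HS} is only feasible when $(1 - \gamma)\log_2 B \le \log_2|\calS_k \setminus t| = \log_2(B/2)$, i.e.\ $\gamma \log_2 B \ge 1$. The two bounds contradict each other, so $p \ge 1/5$.

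The main obstacle is choosing the right object on which to apply the split. A natural first attempt would lower-bound the \emph{marginal} entropy $\HH_\sigma(S_k) \ge \tfrac{41}{42}(1-\gamma)\log_2 B$, using the independence of $S_k$ and $T_k$ under $\sigma\times\tau$, and then split the marginal over $\calS^\star$ versus $H$. The $\tfrac{41}{42}$ factor introduces a constant-order slack $\Theta(1)\cdot\log_2 B$ in the entropy balance, which is exactly the order that blocks a contradiction when $\gamma$ is near the feasibility boundary $1/\log_2 B$. Working with the conditional entropy directly---and using the raw cardinality $|H| < B^{1-7\gamma}$ rather than a probability-weighted support bound---confines the slack to the $\HH(Z) \le 1$ term, which is dominated by $\gamma\log_2 B \ge 1$ from feasibility, closing the argument.
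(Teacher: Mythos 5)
Your proof is correct, and while it rests on the same core mechanism as the paper's --- split the entropy of $S_k$ according to membership in $\calS^\star$, and use the fact that the heavy set $\calS_k \setminus \calS^\star$ has fewer than $B^{1-7\gamma}$ elements to cap the entropy of the heavy part --- the execution differs in two substantive ways. First, the paper spends two Markov bounds over $T_k$ to extract a single column $\widehat{T}_k$ with $\Pr_\sigma[S_k \subset \widehat{T}_k] \le \tfrac{3}{10}$ and $\HH_\sigma(S_k \mid S_k \not\subset \widehat{T}_k) \ge (1-3\gamma)\log_2 B$, and performs the split under $\sigma$ conditioned on $S_k \not\subset \widehat{T}_k$; you skip the extraction entirely, split the averaged conditional entropy $\HH_{\sigma\times\tau}(S_k \mid T_k, S_k \not\subset T_k)$ directly, and transfer the assumed bound $\mu_\sigma(\calS^\star) < \tfrac{1}{5}$ to the conditional measure via $\Pr[S_k \not\subset T_k] \ge \tfrac{41}{42}$. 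Second, the paper closes the contradiction with $(1-3.5\gamma)\log_2 B + 1 < (1-3\gamma)\log_2 B$, which silently requires $B^{\gamma}$ to exceed a constant (the ``large enough $B$'' proviso); you close it against the feasibility constraint $\gamma \log_2 B \ge 1$ forced by $\HH(S_k \mid T_k, S_k \not\subset T_k) \le \log_2(B/2)$, which needs no asymptotic assumption and yields the clean numerical clash $\gamma\log_2 B < \tfrac{1}{4}$ versus $\gamma\log_2 B \ge 1$. Your diagnosis of why the marginal-entropy variant fails is also accurate: the $\tfrac{41}{42}$ dilution introduces $\Theta(\log_2 B)$ of slack, which is fatal precisely at the boundary $\gamma = \Theta(1/\log_2 B)$ where the lemma must still bite. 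The only cosmetic caveat is the degenerate case $\gamma > \tfrac{1}{7}$, where $\calS_k \setminus \calS^\star$ is empty and the claim is trivial; your argument implicitly handles this since $p < \tfrac{1}{5}$ forces the heavy set to be nonempty.
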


\begin{proof}
Our proof will follow the following steps:
\begin{enumerate}
\item We find a column $\widehat{T}_k$ in which the function is mostly
  one (i.e.~typically $S_k \not\subset \widehat{T}_k$), \emph{and} in
  which the entropy $\HH_\sigma(S_k \mid S_k \not\subset
  \widehat{T}_k)$ is large.

\item The mass of elements outside $\calS^\star$ is bounded by the
  mass of elements outside $\calS^\star$ and disjoint from
  $\widehat{T}_k$, plus the mass of elements intersecting
  $\widehat{T}_k$. The latter is small by point 1.

\item There are \emph{few} elements outside $\calS^\star$ and disjoint
  from $\widehat{T}_k$, because they each have high probability. Thus,
  if their total mass were large, their low entropy would drag down
  the entropy of $\HH_\sigma(S_k \mid S_k \not\subset \widehat{T}_k)$,
  contradiction.
\end{enumerate}

To achieve step 1., we rewrite \eqref{eq:enum-Pr} and
\eqref{eq:enum-HS} as:
\begin{align*} 
\Pr_{\sigma\times \tau} [S_k \subset T_k] &~=~ 
  \E_\tau \left[ \Pr_\sigma[S_k \subset T_k ] \right] 
  &~\le~& \frac{1}{10} 
\\
\log_2 B - \HH_{\sigma\times \tau} (S_k \mid T_k, S_k \not\subset T_k) 
  &~=~ \E_\tau \left[ \log_2 B - \HH_\sigma (S_k \mid 
         S_k \not\subset T_k) \right]
  &~\le~& \gamma \log_2 B
\end{align*}
Applying two Markov bounds on $T_k$, we conclude that there exists some
$\widehat{T}_k$ such that:
\begin{equation}
\Pr_\sigma[S_k \subset \widehat{T}_k ] ~\le~ \tfrac{3}{10};
\qquad\qquad
\HH_\sigma (S_k \mid S_k \not\subset \widehat{T}_k) 
  ~\ge~ (1-3\gamma) \log_2 B
\label{eq:tilde-Tk}
\end{equation}
Define $\widehat{\sigma}$ to be the distribution $\sigma$ conditioned
on $S_k \not\subset \widehat{T}_k$.

With regards to step 2., we can write $\mu_\sigma(\calS^\star) \ge 1 -
\Pr_\sigma[S_k \not\in \calS^\star ~\land~ S_k \not\subset
\widehat{T}_k] - \Pr_\sigma[S_k \subset \widehat{T}_k]$. The latter
term is at most $\frac{3}{10}$. In step 3., we will upper bound the
former term by $\frac{1}{2}$, implying $\mu_\sigma(\calS^\star) \ge
\frac{1}{5}$.

For any variable $X$ and event $E$, we can decompose:
\begin{equation}
\HH(X) ~\le~ \Pr[E] \cdot \HH(X \mid E) 
  ~+~ \Pr[\lnot\, E] \cdot \HH(X \mid \lnot\, E) ~+~ H_b(\Pr[E]),
\label{eq:split-entropy}
\end{equation}
where $\HH_b(\cdot) \le 1$ is the binary entropy function. We apply
this relation to the variable $S_k$ under the distrubtion
$\widehat{\sigma}$, choosing $\calS^\star$ as our event $E$. We
obtain:
\[
\HH_{\widehat{\sigma}} (S_k) 
~\le~ \Pr_{\widehat{\sigma}} \big[ \calS^\star \big] \cdot 
      \HH_{\widehat{\sigma}} (S_k \mid S_k \in \calS^\star)
  ~+~ \Pr_{\widehat{\sigma}} \big[ \overline{\calS^\star}\: \big] \cdot
      \HH_{\widehat{\sigma}} (S_k \mid S_k \notin \calS^\star)
  ~+~ 1
\]
We have $\HH_{\widehat{\sigma}} (S_k \mid S_k \in \calS^\star) \le
\log_2 \frac{B}{2}$ since there are at most $\frac{B}{2}$ choices for
$S_k$ disjoint from $\widehat{T}_k$. On the other hand,
$\HH_{\widehat{\sigma}} (S_k \mid S_k \notin \calS^\star) \le (1 -
7\gamma) \log_2 B$. Indeed, there are at most $B^{1-7\gamma}$
distinct values outside $\calS^\star$, since each must have
probability exceeding $1 \big/ B^{1-7\gamma}$. We thus obtain:
\[
\HH_{\widehat{\sigma}} (S_k) 
~\le~ \Pr_{\widehat{\sigma}} \big[ \calS^\star \big] \cdot \log_2 \tfrac{B}{2}
  ~+~ \Pr_{\widehat{\sigma}} \big[ \overline{\calS^\star} \:\big] \cdot
      (1 - 7\gamma) \log_2 B
  ~+~ 1
\]
If we had $\Pr_{\widehat{\sigma}} [\overline{\calS^\star} \:] \ge
\frac{1}{2}$, we would have $\HH_{\widehat{\sigma}} (S_k) ~\le~ (1 -
3.5 \gamma) \log_2 B + 1 ~<~ (1 - 3\gamma) \log_2 B$ for large
enough $B$. But this would contradict \eqref{eq:tilde-Tk}, which
states that $\HH_{\widehat{\sigma}} (S_k) \ge (1-3\gamma) \log_2
B$.

Since $\widehat{\sigma}$ was the distribution $\sigma$ conditioned on
$S_k \not\subset \widehat{T}_k$, Bayes' rule tells us that
$\Pr_\sigma[S_k \not\in \calS^\star ~\land~ S_k \not\subset
\widehat{T}_k] \le \Pr_{\widehat{\sigma}} [S_k \notin \calS^\star \:]
\le \frac{1}{2}$.
\end{proof}

\bigskip

Let us now consider the function $f(T_k) = \E_\sigma [|S_k \cap
T_k|]$. By linearity of expectation, $f(T_k) = \sum_{x \in T_k}
\Pr_\sigma[x \in S_k] = \sum_{x\in T_k} \mu_\sigma(x)$, since $S_k$
has a single element. Since $|S_k \cap T_k| \in \{0,1\}$, we can
write:
\[ \Pr_{\sigma, \tau} [S_k \cap T_k \ne \emptyset]
~=~ \E_{\sigma, \tau} \big[ |S_k \cap T_k| \big]
~=~ \E_\tau \left[ \E_\sigma [ |S_k \cap T_k| ] \right]
~=~ \E_\tau \left[ f(T_k) \right]
\]

Thus, to reach a contradiction with \eqref{eq:enum-Pr}, we must lower
bound the expectation of $f(\cdot)$ over distribution $\tau$. Since we
do not have a good handle on $\tau$, we will approach this goal
indirectly: at first, we will completely ignore $\tau$, and analyze
the distribution of $f(T_k)$ when $T_k$ is chosen uniformly at random
from $\calT_k$. After this, we will use the high entropy of $\tau$, in
the sense of \eqref{eq:enum-HT}, to argue that the behavior on $\tau$
cannot be too different from the behavior on the uniform distribution.

The expectation of $f(\cdot)$ over the uniform distribution is simple
to calculate: $\E_{T_k \in \calT_k} [f(T_k)] = \sum_x \Pr_{T_k \in
\calT_k} [x \in T_k] \cdot \mu_\sigma(x) = \sum_x \frac{1}{2}
\mu_\sigma(x) = \frac{1}{2}$. In the sums, $x$ ranges over elements in
block $k$, each of which appears in $T_k$ with probability
$\frac{1}{2}$. Note that $\mu_\sigma$ is a probability density
function, so $\sum_x \mu_\sigma(x) = 1$.

Our goal now is to show that when $T_k$ is uniform in $\calT_k$, the
distribution of $f(\cdot)$ is tightly concentrated around its mean of
$\frac{1}{2}$, and, in particular, away from zero. We will employ a
Chernoff bound: we have $f(T_k) = \sum_{x\in T_k} \mu_\sigma(x)$, and
each $x \in T_k$ is chosen independently among two distinct
values. Thus, $f(T_k)$ is the sum of $B/2$ random elements of
$\mu_\sigma$, each chosen independently.

The limitation in applying the Chernoff bound is the value of $\max_s
\mu_\sigma(x)$, which bounds the variance of each sample. The set
$\calS^\star$ now comes handy, since we can restrict our attention to
elements $x$ with small $\mu_\sigma$.  Formally, consider
$f^\star(T_k) = \sum_{x \in T_k \cap \calS^\star}
\mu_\sigma(x)$. Clearly $f^\star(T_k)$ is a lower bound for $f(T_k)$.

The mean of $f^\star(\cdot)$ is $\E_{T_k \in \calT_k} [f^\star(T_k)] =
\sum_{x \in \calS^\star} \Pr_{T_k \in \calT_k} [x \in T_k] \cdot
\mu_\sigma(x) = \frac{1}{2} \mu_\sigma(\calS^\star) \ge \frac{1}{10}$.
When $T_k$ is uniform, $f^\star(T_k)$ is the sum of $B/2$ independent
random variables, each of which is bounded by $1 \big/ B^{1 -
7\gamma}$. By the Chernoff bound,
\begin{equation}
\Pr_{T_k \in \calT_k} [ f^\star(T_k) < \tfrac{1}{20} ] 
~<~ e^{-B^{1 - 7\gamma} \cdot \frac{1}{10} \cdot \frac{1}{8}}
~\le~ e^{-B^{1-7\gamma} / 80}
\label{eq:chernoff}
\end{equation}

Now we are ready to switch back to distribution $\tau$:

\begin{claim}
$\Pr_\tau [f^\star(T_k) < \frac{1}{20}] ~\le~ \frac{1}{2}$.
\end{claim}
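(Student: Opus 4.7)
The strategy is to transfer the Chernoff bound~\eqref{eq:chernoff} from the uniform distribution on $\calT_k$ to $\tau$ by exploiting the near-maximal entropy of $\tau$ implied by~\eqref{eq:enum-HT}. Since that inequality bounds an entropy conditioned both on $S_k$ and on the event $\lnot A := \{S_k \not\subset T_k\}$, the natural intermediate object is the marginal distribution $\nu_T$ of $T_k$ obtained from $\sigma \times \tau$ by conditioning on $\lnot A$. A direct calculation gives the density $\nu_T(t) = \tau(t)(1-f(t))/Z$, where $Z = \Pr_{\sigma\times\tau}[\lnot A] \ge 41/42$ by~\eqref{eq:enum-Pr}; dually, $\E_\tau[f(T_k)] = \Pr_{\sigma\times\tau}[A] \le 1/42$.

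The first step will be to note that
\[
\HH(\nu_T) ~=~ \HH_{\sigma\times\tau}(T_k \mid \lnot A) ~\ge~ \HH_{\sigma\times\tau}(T_k \mid S_k, \lnot A) ~\ge~ B/2 - B^{1-7\gamma}/840,
\]
since extra conditioning cannot raise entropy. Thus $\nu_T$ sits within $\Delta := B^{1-7\gamma}/840$ bits of the maximum entropy $\log_2 |\calT_k| = B/2$. Setting $E = \{t : f^\star(t) < 1/20\}$ and splitting $\HH(\nu_T)$ by the indicator of $E$ yields the standard ``high-entropy $\Rightarrow$ anti-concentration'' bound
\[
\Pr_{\nu_T}[E] ~\le~ \frac{\Delta + 1}{\log_2 \bigl(|\calT_k|/|E|\bigr)}.
\]
By~\eqref{eq:chernoff}, $|E|/|\calT_k| \le e^{-B^{1-7\gamma}/80}$, so the denominator is at least $B^{1-7\gamma}/(80\ln 2)$, making $\Pr_{\nu_T}[E] \le (80 \ln 2)/840 + o(1) < 1/10$ once $B$ is large.

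The last step transfers back from $\nu_T$ to $\tau$. Since $\tau(t) = \nu_T(t)\cdot Z/(1-f(t))$, on the set $F = \{t : f(t) < 1/2\}$ we have $\tau(t) < 2\nu_T(t)$, while Markov's inequality applied to $\E_\tau[f] \le 1/42$ gives $\Pr_\tau[\lnot F] \le 1/21$. Combining,
\[
\Pr_\tau[E] ~\le~ 2\Pr_{\nu_T}[E] + \Pr_\tau[\lnot F] ~\le~ \tfrac{2}{10} + \tfrac{1}{21} ~<~ \tfrac{1}{2},
\]
as desired. The main obstacle is that~\eqref{eq:enum-HT} controls a \emph{conditional} entropy rather than $\HH_\tau(T_k)$ directly; the reweighting factor $(1-f(t))$ that converts $\tau$ into $\nu_T$ could in principle inflate $\tau$-mass on $E$, but this distortion is killed by separating off the small $\tau$-mass where $f(t)$ is large via Markov on $\E_\tau[f]$.
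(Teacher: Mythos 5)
Your proof is correct, but it takes a genuinely different route from the paper's. The paper applies two Markov bounds over $S_k \sim \sigma$ to \eqref{eq:enum-Pr} and \eqref{eq:enum-HT} to extract a single row $\widehat{S}_k$ with $\Pr_\tau[\widehat{S}_k \subset T_k] \le 3/10$ and $\HH_\tau(T_k \mid \widehat{S}_k \not\subset T_k) \ge B/2 - B^{1-7\gamma}/280$, then runs the entropy-versus-support argument (using the Chernoff count \eqref{eq:chernoff}) on $\widehat{\tau} = \tau(\,\cdot \mid \widehat{S}_k \not\subset T_k)$, paying $3/10$ up front for the conditioning. You instead keep the averaging over $S_k$: you pass to the marginal $\nu_T$ of $T_k$ under the global conditioning on $S_k \not\subset T_k$, lower-bound $\HH(\nu_T)$ by monotonicity of conditional entropy, and run the same core step (near-maximal entropy plus the exponentially small count of bad $T_k$'s forces $E$ to have small mass) on $\nu_T$; the price is the back-transfer to $\tau$, which you handle with the density ratio $Z/(1-f(t))$ together with a Markov bound on $\E_\tau[f] \le 1/42$. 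The two proofs are of comparable length; yours avoids fixing $\widehat{S}_k$ and ends with a better constant (below $1/4$ rather than exactly $1/2$), while the paper's version mirrors the earlier $\calS^\star$ claim structurally. One convention point worth flagging: the paper reads $\HH_{\sigma\times\tau}(T_k \mid S_k, S_k \not\subset T_k)$ as $\E_{S_k\sim\sigma}\big[\HH_\tau(T_k \mid S_k\not\subset T_k)\big]$, whereas your inequality $\HH(\nu_T) \ge \HH_{\sigma\times\tau}(T_k \mid S_k, S_k \not\subset T_k)$ implicitly reweights $S_k$ by the conditioning event; the two readings differ only by a factor $1/Z \le 42/41$ on the entropy deficit $\Delta$, which your slack easily absorbs, so this does not affect correctness.
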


\begin{proof}
The main steps of our proof are:
\begin{enumerate}
\item As in the analysis of $\calS^\star$, we find a row
  $\widehat{S}_k$ in which the function is mostly one (i.e.~typically
  $\widehat{S}_k \not\subset T_k$), \emph{and} in which the entropy
  $H_\tau (T_k \mid \widehat{S}_k \not\subset T_k)$ is large.

\item $\Pr_\tau [f^\star(T_k) < \frac{1}{20}]$ is bounded by $\Pr_\tau
  [ f^\star(T_k) < \frac{1}{20} ~\land~ \widehat{S}_k \not\subset
  T_k]$, plus the probability that $\widehat{S}_k \subset T_k$. The
  latter is small by point 1.

\item There are few distinct values of $T_k$ for which $f^\star(T_k) <
  \frac{1}{20}$. If these values had a large mass conditioned on
  $\widehat{S}_k \not\subset T_k$, they would drag down the entropy of
  $H_\tau (T_k \mid \widehat{S}_k \not\subset T_k)$.
\end{enumerate}

To achieve step 1., we rewrite \eqref{eq:enum-Pr} and
\eqref{eq:enum-HT} as:
\begin{align*} 
\Pr_{\sigma\times \tau} [S_k \subset T_k] &~=~ 
  \E_\tau \left[ \Pr_\sigma[S_k \subset T_k ] \right] 
  &~\le~& \tfrac{1}{10} 
\\
\tfrac{B}{2} - \HH_{\sigma\times \tau} (T_k \mid S_k, S_k \not\subset T_k) 
  &~=~ \E_\sigma \left[ \tfrac{B}{2} - \HH_\tau (T_k \mid S_k \not\subset T_k) 
\right]
  &~\le~& \tfrac{1}{840} \cdot B^{1 - 7\gamma}
\end{align*}
Applying two Markov bounds on $S_k$, we conclude that there exists some
$\widehat{S}_k$ such that:
\begin{equation}
\Pr_\tau [\widehat{S}_k \subset T_k ] ~\le~ \tfrac{3}{10};
\qquad\qquad
\HH_\tau (T_k \mid \widehat{S}_k \not\subset T_k) 
  ~\ge~ \frac{B}{2} - \tfrac{1}{280} \cdot B^{1 - 7\gamma}
\label{eq:tilde-Sk}
\end{equation}
Define $\widehat{\tau}$ to be the distribution $\tau$ conditioned on
$\widehat{S}_k \not\subset T_k$.

For step 2., we can write:
\begin{align*}
\Pr_\tau \big[ f^\star(T_k) < \tfrac{1}{20} \big] &~=~
\Pr_\tau \big[ f^\star(T_k) < \tfrac{1}{20} ~\land~ 
                       \widehat{S}_k \not\subset T_k \big] 
~+~ \Pr_\tau \big[ f^\star(T_k) < \tfrac{1}{20} ~\land~ 
                       \widehat{S}_k \subset T_k \big]
\\ &~\le~ 
\Pr_\tau \big[ f^\star(T_k) < \tfrac{1}{20} \mid 
      \widehat{S}_k \not\subset T_k \big] 
  ~+~ \Pr_\tau \big[ \widehat{S}_k \subset T_k \big] 
~\le~ \Pr_{\widehat{\tau}} \big[ f^\star(T_k) < \tfrac{1}{20} \big] 
        ~+~ \tfrac{3}{10}
\end{align*}

We now wish to conclude by proving that $\Pr_{\widehat{\tau}} [
f^\star(T_k) < \frac{1}{20} ] \le \frac{1}{5}$. We apply the relation
\eqref{eq:split-entropy} to the variable $T_k$ distributed according
to $\widehat{\tau}$, with the event $E$ chosen to be $f^\star(T_k)
< \frac{1}{20}$:
\[ H_{\widehat{\tau}} (T_k) ~\le~ 
\Pr_{\widehat{\tau}} \big[ f^\star(T_k) < \tfrac{1}{20} \big]  \cdot 
  H_{\widehat{\tau}} \big( T_k \mid f^\star(T_k) < \tfrac{1}{20} \big)
~+~ \Pr_{\widehat{\tau}} \big[ f^\star(T_k) \ge \tfrac{1}{20} \big] 
        \cdot \tfrac{B}{2} ~+~ 1
\]
By \eqref{eq:chernoff}, there are at most $2^{B/2} / e^{B^{1-7\gamma}
/ 80}$ distinct choices of $T_k$ such that $f^\star(T_k) <
\frac{1}{20}$. Thus, $H_{\widehat{\tau}} (T_k \mid f^\star(T_k) <
\frac{1}{20}) ~\le~ \frac{B}{2} - B^{1 - 7\gamma} \cdot \frac{\log_2
e}{80}$.

If $\Pr_{\widehat{\tau}} [f^\star(T_k) < \frac{1}{20} ] ~\ge~
\frac{1}{5}$, then $H_{\widehat{\tau}} (T_k) \le \frac{B}{2} - B^{1 -
7\gamma} \cdot \frac{\log_2 e}{400} + 1 ~<~ \frac{B}{2} - B^{1 -
7\gamma} / 280$ for sufficiently large $B$. But this contradicts
\eqref{eq:tilde-Sk}.
\end{proof}

We have just shown that $\Pr_{\sigma, \tau} [S_k \cap T_k \ne
\emptyset] = \E_\tau [ f(T_k) ] \ge \E_\tau [ f^\star(T_k) ] \ge
\frac{1}{20} \cdot \frac{1}{2} = \frac{1}{40}$. This contradicts
\eqref{eq:enum-Pr}. Thus, at least one of \eqref{eq:enum-Pr},
\eqref{eq:enum-HS}, and \eqref{eq:enum-HT} must be false.

This concludes the proof of Lemma~\ref{lem:rectangle} and of
Theorem~\ref{thm:lsd-distr}.

\section{Conclusion}

We have shown that many important lower bounds can be derived from a
single core problem, through a series of clean, conceptual
reductions. It is unclear what the ultimate value of this discovery
will be, but the following thoughts come to mind:

{\bf 1.} We are gaining understanding into the structure of the
problems at hand.

{\bf 2.} We simplify several known proofs. For example, we sidestep
the technical complications in the previous lower bounds for 2D range
counting \cite{patrascu07sum2d} and exact nearest neighbor
\cite{barkol02nn}.

{\bf 3.} We can now teach data-structure lower bounds to a broad
audience. Even ``simple'' lower bounds are seldom light on technical
details. By putting all the work in one bound, we can teach many
interesting results through clean reductions. (If we are satisfied
with deterministic bounds, the lower bound for set disjointness from
\cite{miltersen99asymmetric} is a one-paragraph counting argument.)

{\bf 4.} Our results hint at a certain degree of redundancy in our
work so far. In doing so, they also mark the borders of our
understanding particularly well, and challenge us to discover
surprising new paths that go far outside these borders.

\paragraph{Acknowledgements.}
The author would like to thank Yakov Nekrich and Marek Karpinsky for
useful discussions on the range reporting problem, and Alex Andoni and
T.S.~Jayram for useful discussions on the randomized LSD lower bound.

\bibliographystyle{plain}
\bibliography{../general}

\appendix

\section{Catalog of Problems}  \label{app:catalog}

\paragraph{Range Queries.}
Given a set of $n$ queries in $d$-dimensional space (say, $[n]^d$), we
can ask two classic queries: \emph{report} the points inside a range
$[a_1, b_1] \times \cdots \times [a_d, b_d]$, or simply \emph{count}
the number of points in the range. These queries lie at the heart of
database analysis, and any course on SQL is bound to start with an
example of the form: ``find employees born between 1980 and 1989,
whose salary is between \$80,000 and \$90,000.''

Note that if there are $k$ points inside the range, reporting them
necessarily takes time $\Omega(k)$. To avoid this technicality, in
this paper we only consider the decision version of reporting: is
there a point inside the range?

\paragraph{Stabbing queries.}
A dual of range queries is \emph{stabbing}: preprocess a set of $n$
boxes of the form $[a_1, b_1] \times \cdots \times [a_d, b_d]$, such
that we can quickly find the box(es) containing a query point. 

Stabbing is a very important form of classification queries. For
instance, network routers have rules applying to packets coming from
some IP range, and heading to another IP range. A query is needed for
every packet passing through the router, making this a critical
problem. This application has motivated several theoretically-minded
papers \cite{thorup03stab, feldmann00ip, applegate07acl,
eppstein01stabbing}, as well as a significant body of
practically-minded ones.

Another important application of stabbing is method dispatching, in
experimental object oriented languages that (unlike, say, Java and
C++) allow dynamic dispatching on more arguments than the class. This
application has motivated several theoretically-minded papers
\cite{muthu96stabbing, alstrup02stabbing, ferragina96stabbing,
ferragina99stabbing}, as well as a number of practically-minded ones.

Our lower bounds for 2D stabbing are the first for this problem, and
in fact, match the upper bound of~\cite{chazelle88functional}.

It is easy to see that stabbing in $d$ dimensions reduces to range
reporting in $2d$ dimensions, since boxes can be expressed as
$2d$-dimensional points.

The decision version of stabbing in 2D also reduces to (weighted)
range counting in 2D by the following neat trick. We replace a
rectangle $[a_1, b_1] \times [a_2, b_2]$ by 4 points: $(a_1, b_1)$ and
$(a_2, b_2)$ with weight $+1$, and $(a_1, b_2)$ and $(a_2, b_1)$ with
weight $-1$. To test whether $(q_1, q_2)$ stabs a rectangle, query the
sum in the range $[0,q_1] \times [0, q_2]$. If the query lies inside a
rectangle, the lower-left corner contributes $+1$ to count. If the
query point is outside, the corners cancel out. 

With a bit of care, the reduction can be made to work for unweighted
range counting, by ensuring the query never stabs more than one
rectangle. Then, it suffices to count points mod $2$.

\paragraph{Partial match.}
The problem is to preprocess a data base of $n$ strings in
$\{0,1\}^d$. Then, a query string from the alphabet $\{0,1,\star\}^d$
is given, and we must determine whether any string in the database
matches this pattern (where $\star$ can match anything). This is
equivalent to a problem in which the query is in $\{0,1\}^d$, and we
must test whether any string in the database is dominated by the query
(where $a$ dominates $b$ if on every coordinate $a_i \ge b_i$).

The first upper bounds for partial match was obtained by Rivest
\cite{rivest76pm}, who showed that the trivial $2^d$ space can be
slightly improved when $d \le 2\lg n$. Charikar, Indyk, and
Panigrahy~\cite{charikar02pm} showed that query time $O(n/2^\tau)$ can
be achieved with space $n\cdot 2^{O(d\lg^2 d / \sqrt{\tau/\lg n})}$.
It is generally conjectured that the problem follows from the curse of
dimensionality, in the following sense: there is no constant $\eps>0$,
such that query time $O(n^{1-\eps})$ can be supported with space
$\poly(m)\cdot 2^{O(d^{1-\eps})}$.

If the problem is parameterized by the number of stars $k$, it is
trivial to achieve space $O(n)$ and query time $O(2^k)$ by exploiting
the binary alphabet. In the more interesting case when the alphabet
can be large, Cole, Gottlieb, Lewenstein \cite{cole04match} achieve
space $O(n\lg^k n)$ and time $O(\lg^k n \cdot \lg\lg n)$ for any
constant $k$.

Partial match can be reduced \cite{indyk01linfty} to exact near
neighbor in $\ell_1$ or $\ell_2$, and to $3$-approximate near neighbor
in $\ell_\infty$. This is done by applying the following
transformation to each coordinate of the query: $0\mapsto
-\frac{1}{2}$; $\star \mapsto \frac{1}{2}$; $1 \mapsto \frac{3}{2}$.

\paragraph{Marked ancestor.}
In this problem, defined by Alstrup, Husfeldt, and
Rauhe~\cite{alstrup98marked}, we are to maintain a complete tree of
degree $b$ and depth $d$, in which vertices have a mark bit. The
updates may mark or unmark a vertex. The query is given a leaf $v$,
and must determine whether the path from the root to $v$ contains any
marked node. In our reduction, we work with the version of the problem
in which \emph{edges} are labeled, instead of nodes. However, note
that the problems are identical, because we can attach the label of an
edge to the lower endpoint.

Marked ancestor reduces to dynamic stabbing in 1D, by associating each
vertex with an interval extending from the leftmost to the rightmost
leaf in its subtree. Marking a node adds the interval to the set, and
unmarking removes it. Then, an ancestor of a leaf is marked iff the
leaf stabs an interval currently in the set.

The decremental version, in which we start with a fully marked tree
and may only unmark, can be reduced to union-find. Each time a node is
unmarked, we union it with its parent. Then, a root-to-leaf path
contains no marked nodes iff the root and the leaf are in the same
set.

The lower bounds of this paper work for both the decremental and
incremental variants.

\end{document}